\newtheorem{Theorem}{Theorem}[section]
\newtheorem{Lemma}{Lemma}[section]
\newtheorem{Corollary}{Corollary}[section]
\newtheorem{Definition}{Definition}[section]
\newtheorem{Proposition}{Proposition}[section]
\theoremstyle{definition}
\newtheorem{Example}{Example}[section]
\newtheorem{Remark}{Remark}[section]
\makeatletter \@addtoreset{equation}{section} \makeatother
\newcommand{\FF}{\mathbb{F}}
\newcommand{\Cc}{\mathcal{C}}
\begin{document}
\title{Constructions of non-Generalized Reed-Solomon MDS codes }

\author{Shengwei Liu,~Hongwei Liu,~Fr\'{e}d\'{e}rique Oggier}
\date{}
\maketitle
\insert\footins{\small
\noindent Shengwei Liu and Hongwei Liu are with the School of Mathematics and Statistics and the Key Laboratory NAA-MOE, Central China Normal University, Wuhan 430079,
China (e-mail: shengweiliu@mails.ccnu.edu.cn, hwliu@ccnu.edu.cn).\\
Fr\'{e}d\'{e}rique Oggier is currently with the School of Mathematics, the University of Birmingham, UK, B15 2TT
 (e-mail:
f.e.oggier@bham.ac.uk). Part of this work was done at Nanyang Technological University, Singapore.\\
 }
{\centering\section*{Abstract}}
 \addcontentsline{toc}{section}{\protect Abstract} 
 \setcounter{equation}{0} 

 Generalized Reed-Solomon codes form the most prominent class of maximum distance separable (MDS) codes, codes that are optimal in the sense that their minimum distance cannot be improved for a given length and code size. The study of codes that are MDS yet not generalized Reed-Solomon codes, called non-generalized Reed-Solomon MDS codes, started with the work by Roth and Lemple (1989), where the first examples where exhibited. It then gained traction thanks to the work by Beelen (2017), who introduced twisted Reed-Solomon codes, and showed that families of such codes are non-generalized Reed-Solomon MDS codes. Finding non-generalized Reed-Solomon MDS codes is naturally motivated by the classification of MDS codes.


In this paper, we provide a generic construction of MDS codes, yielding infinitely many examples. We then explicit families of non-generalized Reed-Solomon MDS codes. Finally we position some of the proposed codes with respect to generalized twisted Reed-Solomon codes, and provide new view points on this family of codes.

\medskip
\noindent{\large\bf Keywords:~}\medskip MDS codes, non-GRS MDS codes

\noindent{\bf2010 Mathematics Subject Classification}: 94B05, 94B65.

\section{Introduction}

For $q$ a prime power, let $\mathbb{F}_{q}$ be the finite field of $q$ elements and $\mathbb{F}_{q}^{*}=\mathbb{F}_{q}\backslash\{0\}$ be its multiplicative group. A linear $[n, k, d]$ code $\Cc$ over $\mathbb{F}_{q}$  is a $k$-dimensional subspace of $\FF_q^n$ (so $k\leq n$) with minimum Hamming distance $d$, meaning that $d$ counts the minimum number of non-zero coordinates over all non-zero vectors, called codewords, in $\Cc$. Basis vectors of $\Cc$ are stacked as rows of a $k\times n$ matrix called a generator matrix of $\Cc$, and a generator matrix  of the form $(I_{k}|A_{k\times(n-k)})$ is said to be a systematic generator matrix, or in systematic form. The dual $\mathcal{C}^\perp$ of $\mathcal{C}$ is the subspace formed by vectors in $\FF_q^n$ which are orthogonal to all codewords of $\mathcal{C}$.

If the parameters $n,k,d$ of the code $\Cc$ reach the Singleton bound, meaning that $d=n-k+1$, then $\Cc$ is called a maximal distance separable (MDS) code. MDS codes are often considered optimal, in that they offer the best trade-off between $d$, representing the erasure/erasure recovery ability of the code, and $n,k$, characterizing the efficiency of the code (how much redundancy for a given amount $k$ of information). They are highly sought after. We recall a classical characterization of MDS codes.

\begin{Proposition}\label{pro1}\cite[p. 319]{R. Roth}
Let $\mathcal{C}$ be a linear $[n,k]$ code over $\mathbb{F}_{q}$. Then $\mathcal{C}$ is MDS if and only if every $k$ columns of a generator matrix of $\mathcal{C}$ are linearly independent.
\end{Proposition}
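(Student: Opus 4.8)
The plan is to prove both directions of the equivalence by relating the minimum distance of $\mathcal{C}$ to the maximal number of columns of a generator matrix $G$ that can be linearly dependent. The key observation linking the two is the standard fact that a codeword $c = xG$ (for $x \in \mathbb{F}_q^k$) has support contained in a coordinate set $S$ precisely when $x$ lies in the left kernel of the submatrix $G_{\bar{S}}$ formed by the columns indexed by the complement of $S$; equivalently, the existence of a nonzero codeword of weight at most $w$ corresponds to the existence of $w$ columns of $G$ whose complementary $n-w$ columns fail to have full column rank $k$.

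First I would recall the Singleton bound $d \le n-k+1$, which holds for any linear code, so that proving $\mathcal{C}$ is MDS amounts to proving $d \ge n-k+1$, i.e.\ that $\mathcal{C}$ has no nonzero codeword of weight $\le n-k$. I would then make the rank--nullity translation precise: for a nonzero $x \in \mathbb{F}_q^k$, the codeword $xG$ has weight strictly less than $n-k+1$ if and only if $xG$ vanishes on some set of $k$ coordinates, which happens if and only if the corresponding $k\times k$ submatrix of $G$ is singular, i.e.\ those $k$ columns are linearly dependent (here I use that $x \ne 0$ lies in the kernel of the $k\times k$ submatrix exactly when its columns are dependent).

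With this dictionary in hand, the two implications fall out directly. For the ``if'' direction, assume every $k$ columns of $G$ are linearly independent; then no nonzero $x$ can annihilate a $k$-subset of columns, so every nonzero codeword has weight at least $n-k+1$, giving $d \ge n-k+1$ and hence, with Singleton, $d = n-k+1$. For the ``only if'' direction, I would argue contrapositively: if some $k$ columns are linearly dependent, their $k\times k$ submatrix is singular, so there is a nonzero $x$ killing those columns, producing a nonzero codeword of weight $\le n-k$, contradicting $d = n-k+1$.

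I do not expect a serious obstacle here, as the result is a classical translation of the MDS condition into the language of generator matrices. The one point requiring a little care is the equivalence ``weight $< n-k+1$ iff some $k$ columns are dependent'': the natural argument counts the number of zero coordinates, but one must check the extremal bookkeeping (a codeword of weight exactly $n-k$ vanishes on exactly $k$ coordinates, whereas lower weight leaves room to choose a $k$-subset among the zeros), and confirm that linear dependence of $k$ columns is the right condition rather than dependence of fewer columns. Making this counting argument clean, together with invoking the Singleton bound, is the whole content of the proof.
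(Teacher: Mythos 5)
Your proof is correct: the dictionary between nonzero codewords of weight at most $n-k$ and sets of $k$ linearly dependent columns (via the left kernel of the corresponding $k\times k$ submatrix), combined with the Singleton bound, is exactly the standard argument for this classical fact. The paper itself gives no proof --- it cites the proposition from the literature (MacWilliams--Sloane) --- so there is nothing to compare against beyond noting that your argument is the textbook one; the only step you leave implicit is that $xG\neq 0$ whenever $x\neq 0$, which holds because the rows of a generator matrix are by definition linearly independent, and this is needed in your ``only if'' direction to guarantee the constructed low-weight codeword is nonzero.
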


It is known that the dual of an MDS code is MDS. The most prominent MDS codes are the generalized Reed-Solomon (GRS) codes \cite{R. Roth}.

\begin{Definition}\label{def-1} \cite[p. 323]{R. Roth}
Let $\alpha_{1},\dots,\alpha_{n}\in\mathbb{F}_{q}\bigcup\{\infty\}$ be distinct elements and $v_{1},\dots,v_{n}\in\mathbb{F}_{q}^{*}$. For $k<n$, $\bm v= (v_1,\ldots,v_n)$ and $\bm \alpha= (\alpha_1,\ldots,\alpha_n)$, the corresponding generalized Reed-Solomon (GRS) code is defined by
$$
GRS_{n,k}(\bm\alpha,\bm v) = \{(v_{1}f(\alpha_{1}),\dots,v_{n}f(\alpha_{n})): f\in \mathbb{F}_{q}[x], \deg f<k\}.
$$
In this setting, for a polynomial $f(x)$ of degree $\deg f(x)<k$, the quantity $f(\infty)$ is defined as the coefficient of $x^{k-1}$ in the polynomial $f(x)$. In the case $\bm v= (v_1,\ldots,v_n) =\bm 1=(1,1,\dots,1)$, the code is called a Reed-Solomon (RS) code.
\end{Definition}

GRS codes are very flexible in terms of their parameters, but for the constraint that $n\leq q$ if $\infty$ is not used, and $n\leq q+1$ otherwise. Since they exist for large choices of ${\bm \alpha}, {\bm v}$, $k,n$ and $\FF_q$, it is natural to ask:
\begin{itemize}
\item[(a)]
Given an MDS code, whether this code is a GRS code, i.e., whether there exist some vectors $\bm\alpha$ and $\bm v$ such that the MDS code is of the form $GRS_{n,k}(\bm\alpha,\bm v)$.
\item[(b)] How to construct MDS codes which are not GRS codes.
\end{itemize}

In \cite{G. Seroussi,A. Lempel1}, it was shown that an MDS code with a systematic generator matrix $(I_k|A_{k\times (n-k)})$ is a GRS code if and only if $A$ is an extended (generalized) Cauchy matrix,
a powerful method to determine whether an MDS code is a GRS code, used e.g. in \cite{A. Lempel} to prove that extending a column of the generator matrix of a GRS code creates a type of non-GRS MDS code.

An alternative method relies on the Schur product (see e.g.~\cite{I. Cascudo1},~\cite{H. Randriambololona}).

\begin{Definition}\label{def-schur}
Let $\bm x=(x_{1},\dots,x_{n})$, $\bm y=(y_{1},\dots,y_{n})\in\mathbb{F}_{q}^{n}$, the {\it Schur product} of $\bm x$ and $\bm y$ is defined as $\bm x*\bm y=(x_{1}y_{1},\dots,x_{n}y_{n})$. The product of two linear codes $\mathcal{C}_{1}$, $\mathcal{C}_{2}\subseteq\mathbb{F}_{q}^{n}$ is defined as
$$\mathcal{C}_{1}*\mathcal{C}_{2}=\left\langle\bm c_{1}*\bm c_{2}\mid\bm c_{1}\in\mathcal{C}_{1},~\bm c_{2}\in\mathcal{C}_{2}\right\rangle_{\mathbb{F}_{q}}$$ where $\left\langle S\right\rangle_{\mathbb{F}_{q}}$ denotes the $\mathbb{F}_{q}$-linear subspace generated by the subset $S$ of $\mathbb{F}_{q}^{n}$.

In particular, if $\mathcal{C}_{1}=\mathcal{C}_{2}$, we call $\mathcal{C}^{2}=\mathcal{C}*\mathcal{C}$ the square code of $\mathcal{C}$.
\end{Definition}


\begin{Proposition}\cite{D. Mirandola}\label{pro-zy}
Let $\mathcal{C}\subseteq\mathbb{F}_{q}^{n}$ be an MDS code with $\dim(\mathcal{C})\leq\frac{n-1}{2}$. The code $\mathcal{C}$ is a GRS code if and only if $\dim(\mathcal{C}^{2})=2\dim(\mathcal{C})-1$.
\end{Proposition}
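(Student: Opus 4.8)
The plan is to treat the two implications separately. Write $k=\dim(\mathcal{C})$ and note that the hypothesis $k\le\frac{n-1}{2}$ is the same as $n\ge 2k+1$; this threshold will reappear as exactly the numerical condition the converse needs. The forward implication (GRS $\Rightarrow$ minimal square) is a direct computation. If $\mathcal{C}=GRS_{n,k}(\bm\alpha,\bm v)$, then for $\deg f,\deg g<k$,
\[
(v_if(\alpha_i))_i*(v_ig(\alpha_i))_i=(v_i^2(fg)(\alpha_i))_i ,
\]
and the products $fg$ span all polynomials of degree $<2k-1$ (the monomials $x^a x^b$ with $0\le a,b\le k-1$ realise every degree from $0$ to $2k-2$), so $\mathcal{C}^2=GRS_{n,2k-1}(\bm\alpha,\bm v^{*2})$ with $\bm v^{*2}=(v_1^2,\dots,v_n^2)$. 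Since $2k-1\le n-2<n$, this code has dimension exactly $2k-1$; the only subtlety, the convention at $\alpha_i=\infty$, I would settle by homogenising.

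For the converse I would first make the floor $2k-1$ explicit. Put $\mathcal{C}$ in systematic form $G=(I_k\mid A)$; by Proposition~\ref{pro1} the MDS property is equivalent to every square submatrix of $A$ being nonsingular, so every entry of $A$ is nonzero and, using $n-k\ge k$, the rows $\bm a_1,\dots,\bm a_k$ of $A$ are linearly independent. Writing the rows of $G$ as $\bm e_i\parallel\bm a_i$, the generators of $\mathcal{C}^2$ are the $\bm e_i\parallel(\bm a_i*\bm a_i)$ and the $\bm 0\parallel(\bm a_i*\bm a_j)$ with $i<j$, so that
\[
\dim(\mathcal{C}^2)=k+\dim\big\langle \bm a_i*\bm a_j : 1\le i<j\le k\big\rangle .
\]
The $k-1$ products $\bm a_1*\bm a_2,\dots,\bm a_1*\bm a_k$ are independent — a dependence among them would, after cancelling the full-support factor $\bm a_1$, force a dependence among $\bm a_2,\dots,\bm a_k$ — so $\dim(\mathcal{C}^2)\ge 2k-1$, and the hypothesis $\dim(\mathcal{C}^2)=2k-1$ says the square is as small as possible.

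The heart of the matter, and the step I expect to be the main obstacle, is to turn this extremality into the GRS structure. My preferred route is geometric: read the $n$ columns of $G$ as points $P_1,\dots,P_n\in\mathbb{P}^{k-1}$. By Proposition~\ref{pro1} these points are in linearly general position (any $k$ of them span $\mathbb{P}^{k-1}$), and, because the coordinates of $\mathcal{C}^2$ are the values at the $P_m$ of the quadratic monomials, $\dim(\mathcal{C}^2)$ equals the number of conditions the $P_m$ impose on quadrics. Thus $\dim(\mathcal{C}^2)=2k-1=2(k-1)+1$ says precisely that $n$ general points in $\mathbb{P}^{k-1}$ impose only $2(k-1)+1$ conditions on quadrics, and the Castelnuovo threshold $n\ge 2(k-1)+3=2k+1$ is met by hypothesis. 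Castelnuovo's Lemma then forces $P_1,\dots,P_n$ onto a rational normal curve of degree $k-1$; parametrising it as $[1:t:\cdots:t^{k-1}]$ identifies the columns of $G$ with $v_m(1,\alpha_m,\dots,\alpha_m^{k-1})$ for distinct $\alpha_m$ and nonzero $v_m$ (the point $[0:\cdots:0:1]$ supplying $\alpha_m=\infty$), i.e. $\mathcal{C}=GRS_{n,k}(\bm\alpha,\bm v)$. The points to watch are that the rigidity persists over the possibly small field $\mathbb{F}_q$ and that linear general position matches the lemma's hypothesis exactly; failing that, I would clear the same obstacle purely within coding theory, via the critical-pairs analysis of the Product Singleton Bound (a linear-algebra analogue of Vosper's theorem), which delivers the GRS conclusion without algebraic geometry.
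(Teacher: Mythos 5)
The paper does not actually prove this proposition: it is quoted directly from \cite{D. Mirandola} (Mirandola--Z\'emor, \emph{Critical pairs for the product singleton bound}), so there is no internal proof to compare against. Judged on its own terms, the easy parts of your proposal are correct and complete. The forward implication, giving $\mathcal{C}^2=GRS_{n,2k-1}(\bm\alpha,\bm v*\bm v)$ with dimension exactly $2k-1$ because $2k-1<n$, is right (and your handling of $\infty$ works: the coefficient of $x^{2k-2}$ in $fg$ is exactly $f_{k-1}g_{k-1}$). The lower bound is also sound: $U\cap W=\{0\}$ gives $\dim(\mathcal{C}^2)=k+\dim\langle \bm a_i*\bm a_j: i<j\rangle$, and the independence of $\bm a_1*\bm a_2,\dots,\bm a_1*\bm a_k$ follows from the full support of $\bm a_1$ plus the full row rank of $A$ (which needs $n-k\geq k$, available here).

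The gap is in the converse, which is the entire substance of the theorem. Castelnuovo's Lemma, in the form you invoke, is a theorem about points over an algebraically closed (or at least infinite) field: the standard proofs use general hyperplane sections and uniform-position arguments that are unavailable over $\mathbb{F}_q$, and you cannot simply cite the lemma for $n$ points of $\mathbb{P}^{k-1}(\mathbb{F}_q)$ in linearly general position. Establishing such a rigidity statement over an arbitrary field is essentially equivalent to the theorem being proven --- this is exactly why Mirandola and Z\'emor developed a coding-theoretic, Vosper-style critical-pair argument instead of quoting algebraic geometry. So your ``preferred route'' assumes, over the field that matters, a statement of the same strength as the conclusion. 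Your declared fallback --- ``the critical-pairs analysis of the Product Singleton Bound'' --- is not an argument but a pointer to the method of the very paper this proposition is cited from; carrying out that classification is the real work and occupies most of that paper. As it stands, your proposal proves the easy direction, correctly locates the difficulty of the hard direction, but does not resolve it.
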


\begin{table}
\centering
\begin{tabular}{l|p{5cm}|p{4.5cm}}
\hline
{\bf Method} &  {\bf Base code} & {\bf References} \\
\hline
Cauchy & extended GRS codes  & \cite{A. Lempel,G. Seroussi,A. Lempel1}\\
  \hline
Schur & GTRS codes \newline algebraic geometry codes  & \cite{P. Beelen1,C.ZHU,C.ZHU2,J.SUI,HYN,YQ,C.ZHU1}  \newline \cite{chenhao} \\
& constacyclic codes & \cite{S.LIU} \\
  \hline
\end{tabular}
\caption{ \label{tab:my_label1}
   A summary of known constructions of non-GRS MDS codes: the column method contains the proof method used to show that the code is not a GRS code (either Cauchy matrices or the Schur product), the base code column gives the code that was modified to obtain a non-GRS MDS code.}
\end{table}

The criterion given by the above proposition was used in \cite{S.LIU}, to provide a sufficient/necessary and sufficient condition for a constacyclic code of length $n$/prime length to be a GRS code, respectively. It was also used in \cite{chenhao}, to construct non-GRS MDS codes from arbitrary genus algebraic curves. Finally, this criterion is also widely used (e.g. in \cite{P. Beelen1,C.ZHU,C.ZHU2,J.SUI,HYN,YQ,C.ZHU1}) in the context of generalized twisted Reed-Solomon (GTRS) codes, a class of codes which generalizes GRS codes (a formal definition is postponed to later, in Definition \ref{def-2}, with its connection to twisted Reed-Solomon codes). We provide a summary of the known construction of non-GRS MDS codes in Table \ref{tab:my_label1}.

\begin{table}
\begin{center}
\begin{tikzpicture}[baseline=(current bounding box.center), scale=0.7, every node/.style={scale=0.8}]
\draw[color=blue!60](-3,0)--(1,0);
\filldraw[color=blue!60, fill=blue!5, very thick, opacity=0.5](-1,0) circle (2);
\filldraw[color=gray!60, fill=gray!5, very thick,opacity=0.5](1.5,0) circle (2);
\node[] at (1.5,-2.5){\textcolor{gray}{GTRS codes}};
\node[] at (-1,-2.5){\textcolor{blue}{MDS codes}};
\node[] at (-1,-1){\textcolor{blue}{GRS codes}};
\node[] at (-1,1){\textcolor{blue}{non-GRS codes}};
\end{tikzpicture}
\hspace{5mm}
\begin{tabular}{c|cc}
 & GRS & non-GRS \\
\hline
MDS GTRS & $\checkmark$ & $\checkmark$ \\
non-GTRS & & \\
\end{tabular}
\caption{\label{fig:gtrs}
Interactions between generalized twisted Reed-Solomon (GTRS)  codes and other codes: on the left,
GTRS codes contain non-MDS codes, GRS (thus MDS) codes and non-GRS MDS codes. On the right, we zoom into the four known classes of MDS codes. Checkmarks highlight the contribution of this paper.}
\end{center}
\end{table}

Through the lens of GTRS codes, as highlighted in Table \ref{fig:gtrs}, the question of constructing MDS codes which are or not GRS codes further split into determining whether these MDS codes are, or not generalized twisted Reed-Solomon codes.
The end motivation between these distinctions is some classification of MDS codes. To start addressing it, it is needed to have large families of MDS codes, which, for now, can be classified into these four classes (see Table \ref{fig:gtrs}), before being possibly further classified into other/finer classes.

Our contributions fit into the above picture. First, in Section 2, we provide a generic construction of MDS codes, yielding arbitrary many examples of MDS codes, out of which we explicit several families of non-GRS codes. Then in Section 3, we look at MDS GTRS codes.
We give an example of a GRS code which is also a GTRS code, implying that GTRS codes also contain GRS codes. We also prove that some the exhibited non-GRS codes from Section 2 are GTRS.
To the best of our knowledge, there is no standard method to prove that an MDS code is non-GTRS, therefore deciding whether examples of non-GTRS codes are found within the proposed construction is left for future work. Nevertheless, we prove some result that provides some alternative view point on (non-)GTRS codes.

\section{A Construction of non-GRS MDS Codes}

Let $\alpha_{1},\dots,\alpha_{k}\in\mathbb{F}_{q}\bigcup\{\infty\}$ be distinct elements and $V_{\bm\alpha}$ be the Vandermonde matrix associated with $\alpha_{1},\dots,\alpha_{k}$, i.e., if $\infty\notin\{\alpha_{1},\dots,\alpha_{k}\}$ then
$$
V_{\bm\alpha}=\left(
\begin{matrix}
1&1&\dots&1\\
\alpha_{1}&\alpha_{2}&\dots&\alpha_{k}\\
\vdots&\vdots&&\vdots\\
\alpha_{1}^{k-1}&\alpha_{2}^{k-1}&\dots&\alpha_{k}^{k-1}
\end{matrix}
\right)_{k\times k},
$$
if instead $\infty=\alpha_{k}$ then
$$
V_{\bm\alpha}=\left(
\begin{matrix}
1&1&\dots&1&0\\
\alpha_{1}&\alpha_{2}&\dots&\alpha_{k-1}&0\\
\vdots&\vdots&&\vdots&\vdots\\
\alpha_{1}^{k-1}&\alpha_{2}^{k-1}&\dots&\alpha_{k-1}^{k-1}&1
\end{matrix}
\right)_{k\times k}.
$$

Let $E_{ij}$ denote the matrix whose $(i,j)$th entry is $1$ and others are zeros.
Consider the $k\times k$ matrix $X$ whose $(i,j)$th entry is either $0$, or a positive power of the indeterminate $x$ if $(i,j)\in \mathcal{I}$ for $\mathcal{I}$ an arbitrary non-empty choice of positions in a $k\times k$ matrix. Alternatively, we may write
$$
X=\sum_{(i,j)\in\mathcal{I}}x^{s_{ij}}E_{ij}
$$
where $(i,j)\in \mathcal{I}$ and $s_{ij}\geq 1$ for $(i,j)\in \mathcal{I}$.
Then the determinant of the matrix
\begin{equation*}
V_{\bm\alpha}+X
\end{equation*}
is by construction a polynomial in  $\mathbb{F}_{q}[x]$.

Throughout this paper, $\bm v$ denotes a vector over $\mathbb{F}_{q}$ and all entries of $\bm v$ are nonzero.
\begin{Proposition}\label{pro31}
With the notations above, $\det(V_{\bm\alpha}+X)$ is a non-zero polynomial in $\FF_q[x]$. The same holds for $V_{\bm \alpha}D_{\bm v}$ for $D_{\bm v}$ a diagonal matrix with the coefficients of ${\bf v}$ on its diagonal.



\end{Proposition}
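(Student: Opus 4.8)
The plan is to exhibit a single nonzero coefficient of the polynomial $\det(V_{\bm\alpha}+X)$, namely its constant term, and to identify that constant term with $\det(V_{\bm\alpha})$, which is nonzero. The whole argument rests on the fact that substituting $x=0$ annihilates $X$.

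First I would record that $V_{\bm\alpha}$ is nonsingular. When $\infty\notin\{\alpha_{1},\dots,\alpha_{k}\}$ this is the classical Vandermonde identity $\det(V_{\bm\alpha})=\prod_{1\le i<j\le k}(\alpha_{j}-\alpha_{i})$, which is nonzero because the $\alpha_{i}$ are distinct. When $\infty=\alpha_{k}$, I would expand the determinant of the second displayed matrix along its last column: the only nonzero entry there is the $(k,k)$ entry, equal to $1$, so the determinant reduces to the $(k-1)\times(k-1)$ Vandermonde determinant in $\alpha_{1},\dots,\alpha_{k-1}$, again nonzero by distinctness. In either case $\det(V_{\bm\alpha})\neq 0$.

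Next comes the key observation. Since $X=\sum_{(i,j)\in\mathcal{I}}x^{s_{ij}}E_{ij}$ with every exponent $s_{ij}\ge 1$, every entry of $X$ is a multiple of $x$, so $X$ becomes the zero matrix after the substitution $x=0$. The evaluation $x\mapsto 0$ is the ring homomorphism $\mathbb{F}_{q}[x]\to\mathbb{F}_{q}$ sending a polynomial to its constant term, and the determinant is a polynomial expression in the matrix entries, hence it commutes with this homomorphism. Therefore the constant term of $\det(V_{\bm\alpha}+X)$ equals $\det\big((V_{\bm\alpha}+X)\big|_{x=0}\big)=\det(V_{\bm\alpha})\neq 0$. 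A polynomial with nonzero constant term is itself nonzero, which settles the first claim.

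For the statement involving $D_{\bm v}$, I would run the identical argument on the matrix $V_{\bm\alpha}D_{\bm v}+X$: the constant term of its determinant is $\det(V_{\bm\alpha}D_{\bm v})=\det(V_{\bm\alpha})\prod_{i=1}^{k}v_{i}$, which is nonzero because $\det(V_{\bm\alpha})\neq 0$ and all entries $v_{i}$ of $\bm v$ are nonzero by hypothesis; the same substitution $x=0$ kills $X$ regardless of whether $D_{\bm v}$ multiplies $V_{\bm\alpha}$ on the left or the right. I do not anticipate a genuine obstacle here. The only points demanding a little care are the nonsingularity of $V_{\bm\alpha}$ in the $\infty$-case, handled by the cofactor expansion above, and the justification that evaluating the determinant at $x=0$ legitimately extracts the constant term of the polynomial, which follows from the ring-homomorphism remark.
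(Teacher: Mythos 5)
Your proof is correct, but it takes a genuinely different route from the paper's. The paper expands $\det(V_{\bm\alpha}+X)$ by multilinearity of the determinant into $2^{k}$ determinants indexed by the choice, for each column, of the $V_{\bm\alpha}$-column or the $X$-column, and then uses the permutation-sum formula to argue that every matrix containing at least one $X$-column contributes a multiple of $x$; this yields $\det(V_{\bm\alpha}+X)=\det(V_{\bm\alpha})+x\,h(x)$. You reach the same conclusion in one step by observing that evaluation at $x=0$ is a ring homomorphism $\mathbb{F}_{q}[x]\to\mathbb{F}_{q}$ commuting with the determinant, so the constant term of $\det(V_{\bm\alpha}+X)$ is exactly $\det\bigl((V_{\bm\alpha}+X)|_{x=0}\bigr)=\det(V_{\bm\alpha})\neq 0$; your handling of the $\infty$-case by cofactor expansion and of the $D_{\bm v}$-variant via $\det(V_{\bm\alpha}D_{\bm v})=\det(V_{\bm\alpha})\prod_i v_i$ is also sound (and you address the $D_{\bm v}$ case more explicitly than the paper does). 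Your argument is shorter, cleaner, and avoids the combinatorial bookkeeping entirely. What the paper's heavier machinery buys is reuse: the same multilinear expansion is invoked again in the proof of Proposition \ref{prop1}, where one must track the coefficient of $\beta$ (not merely the constant term) in a determinant of the form $\det(M_{\bm\alpha,\bm v}+M)$, and there the evaluation-at-zero trick alone would not suffice. So your approach is the better proof of this particular proposition, while the paper's sets up a technique it needs later.
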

\begin{proof}
Write
\[
V_{\bm\alpha}+X 
=
\begin{pmatrix}
{\bf v}_1 + {\bf x}_1 &  {\bf v}_2 + {\bf x}_2 &
\ldots &
{\bf v}_k + {\bf x}_k
\end{pmatrix},
\]
for ${\bf v}_1,\ldots,{\bf v}_k$ columns of $V_{\bm\alpha}$
and ${\bf x}_1,\ldots,{\bf x}_k$ columns of $X$. 

By multilinearity of the determinant,
\[
\det(V_{\bm\alpha}+X)
=
\sum_{i=1}^{2^{k}}\det(Y^{(i)})
\]
where each $Y^{(i)}$ is a matrix whose $j$th column is either ${\bf v}_{j}$ or ${\bf x}_{j}$. Since the first column is either ${\bf v}_{1}$ or ${\bf x}_{1}$ (2 choices), then independently, the second column is either ${\bf v}_{2}$ or ${\bf x}_{2}$ (2 choices, for a total of 4 choices), until the $k$ columns are filled, there are indeed $2^k$ such matrices.

Say we label by $Y^{(1)}$ the matrix whose columns are all from $V_{\bm\alpha}$.
Let $Y$ be any one of the $Y^{(2)}, \ldots,Y^{(2^k)}$, whose $(i,j)$th entry is $y_{ij}$.
Then $$\det(Y)=\sum_{\pi}{\rm sgn}(\pi)y_{1\pi(1)}y_{2\pi(2)}\dots y_{k\pi(k)}$$
where $\pi$ runs over the permutations of order $k$, sgn($\pi$) denotes its signature, and for each choice of $\pi$, at least one of the $y_{i\pi(i)}$ is an element of $X$ since by choice of $Y$, at least one of its columns belongs to $X$. By definition of $X$,  $y_{i\pi(i)}$ is either $0$ or a positive power of $x$.  Thus for each choice of $\pi$, $y_{1\pi(1)}y_{2\pi(2)}\dots y_{k\pi(k)}$ is either $0$ or a multiple of a positive power of $x$, and $\det(Y)$ is accordingly either $0$ or a polynomial which is a multiple of $x$. This shows that
$$
\det(V_{\bm\alpha}+X) = \det(V_{\bm\alpha})+ xh(x)
$$
and so is a non-zero polynomial in $x$ since $\det(V_{\bm\alpha})\neq 0$.


\end{proof}

By the definition of GRS codes (see Definition \ref{def-1}), if $\infty\notin\{\alpha_{1},\dots,\alpha_{n}\}$ then a generator matrix of $GRS_{n,k}(\bm\alpha,\bm v)$ is
\begin{equation*}
G_{\bm\alpha,\bm v}=\left(
\begin{matrix}
v_{1}&v_{2}&\dots&v_{n}\\
v_{1}\alpha_{1}&v_{2}\alpha_{2}&\dots&v_{n}\alpha_{n}\\
v_{1}\alpha_{1}^{2}&v_{2}\alpha_{2}^{2}&\dots&v_{n}\alpha_{n}^{2}\\
\vdots&\vdots&&\vdots\\
v_{1}\alpha_{1}^{k-1}&v_{2}\alpha_{2}^{k-1}&\dots&v_{n}\alpha_{n}^{k-1}
\end{matrix}
\right)_{k\times n},~n\leq q,
\end{equation*}
if instead $\infty=\alpha_{n}$ then
a codeword is of the form $(v_{1}f(\alpha_{1}),\dots,v_{n-1}f(\alpha_{n-1}), v_n)$ and thus a generator matrix of $GRS_{n,k}(\bm\alpha,\bm v)$ is
\begin{equation}\label{eq:Galphav}
G_{\bm\alpha,\bm v}=\left(
\begin{matrix}
v_{1}&v_{2}&\dots&v_{n-1}&0\\
v_{1}\alpha_{1}&v_{2}\alpha_{2}&\dots&v_{n-1}\alpha_{n-1}&0\\
v_{1}\alpha_{1}^{2}&v_{2}\alpha_{2}^{2}&\dots&v_{n-1}\alpha_{n-1}^{2}&0\\
\vdots&\vdots&&\vdots&\vdots\\
v_{1}\alpha_{1}^{k-1}&v_{2}\alpha_{2}^{k-1}&\dots&v_{n-1}\alpha_{n-1}^{k-1}&v_{n}
\end{matrix}
\right)_{k\times n},~n\leq q+1.
\end{equation}

Let $\mathbb{F}_{q^{b}}/\mathbb{F}_{q}$ be a field extension, $b>1$ and $\beta\in\mathbb{F}_{q^{b}}\setminus\mathbb{F}_{q}$.
From $G_{\bm\alpha,\bm v}$ defined in (\ref{eq:Galphav}) over $\mathbb{F}_{q}$, build the new generator matrix
\[
G_{\bm\alpha,\bm v} +
\sum_{(i,j)\in\mathcal{J}}\beta^{s_{ij}}E_{ij}
\]
over $\mathbb{F}_{q^{b}}$ for $\mathcal{J}$ an arbitrary non-empty choice of positions in a $k\times n$ matrix.

\begin{Theorem}\label{th31}
With the notations above, let $\mathcal{C}$ be the linear code over $\mathbb{F}_{q^{b}}$ generated by $G_{\bm \alpha, \bm v} + \sum_{(i,j)\in\mathcal{J}}\beta^{s_{ij}}E_{ij}$.
If the degree of the polynomial obtained by picking any $k$ columns of $G_{\bm \alpha,\bm v} + \sum_{(i,j)\in\mathcal{J}}x^{s_{ij}}E_{ij}\in\FF_q[x]$ and computing the determinant of the resulting matrix is strictly less than the degree of the minimal polynomial of $\beta$ over $\mathbb{F}_{q}$, then $\mathcal{C}$ is an MDS code.
\end{Theorem}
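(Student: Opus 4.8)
The plan is to reduce everything to the MDS criterion of Proposition~\ref{pro1}: the code $\mathcal{C}$ is MDS precisely when every choice of $k$ columns of its generator matrix $G_{\bm\alpha,\bm v}+\sum_{(i,j)\in\mathcal{J}}\beta^{s_{ij}}E_{ij}$ is linearly independent, i.e.\ the corresponding $k\times k$ submatrix is invertible. So I would fix an arbitrary set $S$ of $k$ column indices and prove that the associated $k\times k$ minor is nonzero.

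First I would pass from the field $\mathbb{F}_{q^{b}}$ back to $\mathbb{F}_{q}$. Writing $M(x)=G_{\bm\alpha,\bm v}+\sum_{(i,j)\in\mathcal{J}}x^{s_{ij}}E_{ij}$, the $k\times k$ submatrix of the generator matrix of $\mathcal{C}$ on the columns of $S$ is exactly the submatrix of $M(\beta)$ on $S$; hence its determinant equals $p_{S}(\beta)$, where $p_{S}(x)\in\mathbb{F}_{q}[x]$ is the determinant of $M(x)$ restricted to the columns of $S$. By hypothesis $\deg p_{S}<\deg m_{\beta}$, with $m_{\beta}$ the minimal polynomial of $\beta$ over $\mathbb{F}_{q}$.

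Next I would show that $p_{S}(x)$ is a nonzero polynomial, reusing the multilinearity computation from the proof of Proposition~\ref{pro31}. Splitting each column of the restricted $M(x)$ into its $G_{\bm\alpha,\bm v}$ part and its perturbation part (a monomial in $x$, or $0$), every expansion term that uses at least one perturbation column is either $0$ or a multiple of a positive power of $x$, whereas the unique term built entirely from columns of $G_{\bm\alpha,\bm v}$ contributes the constant determinant of the corresponding $k\times k$ submatrix of $G_{\bm\alpha,\bm v}$. Because $G_{\bm\alpha,\bm v}$ generates a GRS code, which is MDS, Proposition~\ref{pro1} guarantees this submatrix is invertible, so that constant is a nonzero element of $\mathbb{F}_{q}$. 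Thus $p_{S}(x)=c_{S}+x\,h_{S}(x)$ with $c_{S}\in\mathbb{F}_{q}^{*}$, and in particular $p_{S}\neq 0$.

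The conclusion then follows from the defining property of $m_{\beta}$: if $p_{S}(\beta)=0$ then $m_{\beta}\mid p_{S}$ in $\mathbb{F}_{q}[x]$, forcing $\deg p_{S}\geq\deg m_{\beta}$ and contradicting the degree hypothesis. Hence $p_{S}(\beta)\neq 0$, and since $S$ was arbitrary, every $k$ columns of the generator matrix of $\mathcal{C}$ are linearly independent, making $\mathcal{C}$ MDS. The one point deserving care is that the constant term $c_{S}$ is nonzero for every $S$, which is exactly the MDS-ness of the base GRS code; the case $\alpha_{n}=\infty$, which changes the last column of $G_{\bm\alpha,\bm v}$, does not affect this, as the GRS code is MDS either way.
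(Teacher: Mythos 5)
Your proof is correct and follows essentially the same route as the paper's: reduce to Proposition~\ref{pro1}, view each $k\times k$ minor of the generator matrix as the evaluation at $\beta$ of a polynomial in $\mathbb{F}_q[x]$, show that polynomial is nonzero, and conclude from the degree bound against the minimal polynomial of $\beta$. The only cosmetic difference is that you re-derive the nonzero-polynomial fact inline (justifying the constant term via MDS-ness of the GRS code), whereas the paper delegates this step to Proposition~\ref{pro31}; the two justifications are equivalent, both resting on the non-vanishing of Vandermonde-type minors.
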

\begin{proof}
To prove that $\mathcal{C}$ is an MDS code,
by Proposition \ref{pro1}, we need to prove that any $k$ columns of $G_{\bm\alpha,\bm v} +
\sum_{(i,j)\in\mathcal{J}}\beta^{s_{ij}}E_{ij}$ are linearly independent.

The determinant of any $k$ columns of $G_{\bm\alpha,\bm v} + \sum_{\mathcal{J}} \beta^{s_{ij}} E_{ij}$ is a polynomial in $\beta$, say $g(\beta)$, which is the evaluation in $\beta$ of the polynomial $g(x)$  obtained by picking any $k$ columns of $G_{\bm\alpha,\bm v} +  \sum_{(i,j)\in\mathcal{J}}x^{s_{ij}}E_{ij}$ and computing its determinant.

By Proposition \ref{pro31}, any such $g(x)$ is a non-zero polynomial, therefore $g(\beta)$ cannot be $0$ since its degree is strictly less than the degree of the minimal polynomial of $\beta$ over $\mathbb{F}_{q}$.
\end{proof}

\begin{Corollary}
The method of the above theorem provides infinitely many examples of MDS codes.
\end{Corollary}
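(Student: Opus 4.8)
The plan is to exhibit an explicit infinite family of codes produced by the theorem, rather than to count all of them. I would fix once and for all a base field $\mathbb{F}_q$, distinct elements $\alpha_1,\dots,\alpha_n\in\mathbb{F}_q\cup\{\infty\}$ and nonzero scalars $\bm v$ defining a GRS generator matrix $G_{\bm\alpha,\bm v}$, together with a nonempty set of positions $\mathcal{J}$ and exponents $s_{ij}\ge 1$. These data are held constant; only the extension $\mathbb{F}_{q^b}/\mathbb{F}_q$ and the element $\beta$ will be allowed to vary. The key observation is that the polynomials appearing in the hypothesis of Theorem \ref{th31} --- the determinants $g(x)$ of the $k$-column submatrices of $G_{\bm\alpha,\bm v}+\sum_{(i,j)\in\mathcal{J}}x^{s_{ij}}E_{ij}$ --- depend only on this fixed data and \emph{not} on the chosen extension or on $\beta$.

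The second step is to produce a degree bound on these polynomials that is uniform over all extensions. Every entry of $G_{\bm\alpha,\bm v}+\sum_{(i,j)\in\mathcal{J}}x^{s_{ij}}E_{ij}$ is a polynomial in $x$ of degree at most $s_{\max}:=\max_{(i,j)\in\mathcal{J}}s_{ij}$, and any positive power of $x$ can only come from a position in $\mathcal{J}$. Expanding any $k\times k$ determinant by the Leibniz formula, each term is a product of $k$ entries using each row and column once, so its degree is at most $\sum_{(i,j)\in\mathcal{J}}s_{ij}=:D$. Hence $\deg g(x)\le D$ for every choice of $k$ columns, and $D$ is a finite constant determined by the fixed data alone.

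Next I would invoke standard finite field theory: for every integer $b$ there exists an irreducible polynomial of degree $b$ over $\mathbb{F}_q$, equivalently $\mathbb{F}_{q^b}$ contains an element $\beta_b$ whose minimal polynomial over $\mathbb{F}_q$ has degree exactly $b$. Choosing any $b>D$ together with such a $\beta_b$, we obtain $\deg g(x)\le D<b=\deg(\text{minimal polynomial of }\beta_b)$ for every $k$-column determinant $g$, so the hypothesis of Theorem \ref{th31} is satisfied and the code $\mathcal{C}_b$ generated over $\mathbb{F}_{q^b}$ by $G_{\bm\alpha,\bm v}+\sum_{(i,j)\in\mathcal{J}}\beta_b^{s_{ij}}E_{ij}$ is MDS.

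Finally, to conclude infinitude I would observe that the codes $\mathcal{C}_b$, as $b$ ranges over the integers greater than $D$, live over the fields $\mathbb{F}_{q^b}$, whose cardinalities $q^b$ are pairwise distinct; hence the $\mathcal{C}_b$ are pairwise distinct, giving infinitely many MDS codes. The only genuinely delicate point is the second step: one must ensure the degree bound $D$ is independent of the extension and of $\beta$, so that a single growing tower of extensions keeps satisfying the hypothesis. Once this uniformity is in place, the existence of high-degree elements and the distinctness of the resulting codes are routine, and varying $q$ as well would produce still larger families.
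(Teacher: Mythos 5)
Your proposal is correct and follows essentially the same route as the paper: observe that the $k$-column determinant polynomials $g(x)$ depend only on the fixed data over $\mathbb{F}_q$, bound their degrees uniformly, and then choose $b$ larger than that bound so that a $\beta$ of minimal-polynomial degree $b$ exists, yielding an MDS code for every sufficiently large $b$. The only differences are cosmetic: you derive the a priori bound $D=\sum_{(i,j)\in\mathcal{J}}s_{ij}$ from the Leibniz expansion where the paper simply computes the maximal degree $d$ over all $k\times k$ submatrices, and you spell out the existence of degree-$b$ elements and the pairwise distinctness of the resulting codes over different fields, points the paper leaves implicit.
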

\begin{proof}
For a given $k$, we compute all the $k\times k$ submatrices of $G_{\bm \alpha,\bm v} + \sum_{(i,j)\in\mathcal{J}}x^{s_{ij}}E_{ij}$, each will give a polynomial in $\FF_q[x]$ and we identify the highest polynomial degree $d$. Once this highest degree $d$ is known, it is enough to choose $b$ large enough to find a $\beta$ whose minimal polynomial has a degree larger than $d$, to ensure the MDS property, thus providing infinitely many examples.
\end{proof}


For an $[n,1,n]$ linear MDS code over $\mathbb{F}_{q}$, a generator matrix is a vector ${\bm v}$ of length $n$ whose entries are all non-zero, which means the code is a repetition code up to a scaling of each coordinate. This vector $(\bm v)$ is also a generator matrix for a GRS code of dimension 1. Its dual code is then an $[n,n-1,2]$ MDS GRS code and conversely, any $[n,n-1,2]$ MDS code is the dual code of a repetition code (again up to a coordinate scaling).

For an $[n,2,n-1]$ MDS code over $\mathbb{F}_{q}$, since the minimum weight of a non-zero codeword is $n-1$, we can choose such a vector to be the first row of a generator matrix, and find a linearly independent vector as a second row, so they generate the code. Up to a non-zero scaling of each column, we may suppose the non-zero coefficients of the first row of the generator matrix are all $1$, and up to permutations of the columns, that the zero coefficient is in the last column. Since
from (\ref{eq:Galphav}), we know that
\[
G_{\bm\alpha,\bm 1}=\left(
\begin{matrix}
 1& 1 &\dots& 1&0\\
\alpha_{1}&\alpha_{2}&\dots&\alpha_{n-1} & 1\\
\end{matrix}
\right)_{2\times n},~n\leq q+1,
\]
to prove that the code is a GRS code,
it is enough to show that for an MDS code, the second row contains $n-1$ distinct coefficients, followed by 1.
By Proposition \ref{pro1}, any $2\times2$ submatrix has a  non-zero determinant. This forces all coefficients to be pairwise distinct and non-zero. This means that up to a scaling of the last column, the non-zero coefficient may be assumed to be 1, as desired. Finally as before, an $[n,n-2,3]$ MDS code is the dual of an MDS code with parameters $[n,2,n-1]$.

The classification of linear MDS codes for $k=1,2,n-1,n-2$ is thus settled and  we assume from here to the end of this section that $3\leq k\leq n-3$ . For $k=3$, explicit counting of GRS codes compared to MDS codes are provided in \cite[Table 1]{BeelenGlynn} for some small values of $q$ and $n$.

Next we explicit several families of non-GRS MDS codes. To prove that they are not GRS codes, we will use the Schur product approach.

\begin{Proposition}\label{prop1}
With the notations above, let $3\leq k\leq \frac{n-1}{2}$ and $\beta\in\mathbb{F}_{q^{b}}$ but $\beta\not\in \mathbb{F}_{q}$. Let $\mathcal{C}$ be  the linear code  over $\mathbb{F}_{q^{b}}$ generated by the matrix $G_{\bm\alpha,\bm v} + \sum_{(i,j)\in\mathcal{J}}\beta E_{ij}$ which we assume has three consecutive rows $\bm g_{j},\bm g_{j+1},\bm g_{j+2}$ such that the vector $\bm e=\bm g_{j}*\bm g_{j+2}-\bm g_{j+1}*\bm g_{j+1}$ contains at least one entry of the form $a\beta$ or $a\beta \pm\beta^2$ for some non-zero $a\in\mathbb{F}_{q}$ and for $\bm g_{i}$ the $i$th row of the generator matrix. If the degree of the minimal polynomial of $\beta$ over $\mathbb{F}_{q}$ is strictly larger than $4k$, then there exists some $\bm v$ such that  $\mathcal{C}$ is a non-GRS MDS code.
\end{Proposition}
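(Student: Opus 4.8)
The plan is to combine two facts established earlier in the excerpt: first, Theorem~\ref{th31}, which guarantees that for a suitable choice of $\beta$ (one whose minimal polynomial over $\mathbb{F}_q$ has large enough degree) the code $\mathcal{C}$ is MDS; and second, Proposition~\ref{pro-zy}, which says that an MDS code $\mathcal{C}$ with $\dim(\mathcal{C})\le\frac{n-1}{2}$ is a GRS code if and only if $\dim(\mathcal{C}^2)=2\dim(\mathcal{C})-1=2k-1$. Since the hypothesis gives $3\le k\le\frac{n-1}{2}$, Proposition~\ref{pro-zy} applies, so to prove $\mathcal{C}$ is \emph{non}-GRS it suffices to show that the square code $\mathcal{C}^2=\mathcal{C}*\mathcal{C}$ has dimension \emph{strictly greater} than $2k-1$. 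The MDS part is essentially a citation of Theorem~\ref{th31}: I must only check that the hypothesis "the degree of the minimal polynomial of $\beta$ over $\mathbb{F}_q$ is strictly larger than $4k$'' implies the degree condition of that theorem. Each entry of a $k\times k$ submatrix of $G_{\bm\alpha,\bm v}+\sum_{\mathcal J}x^{s_{ij}}E_{ij}$ is a polynomial of bounded degree, and since here all the twisting exponents are $s_{ij}=1$ (the perturbation is $\beta E_{ij}$, i.e.\ $x^1$), the determinant polynomial $g(x)$ has degree at most $k$, well below $4k$; so the MDS conclusion holds.

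The heart of the argument is the dimension count for $\mathcal{C}^2$. First I would note that $\mathcal{C}^2$ is spanned by all Schur products $\bm g_i*\bm g_\ell$ of rows of the generator matrix, of which there are $\binom{k+1}{2}$. For an \emph{untwisted} GRS code the rows are $\bm g_i=(v_t\alpha_t^{i-1})_t$, and the products $\bm g_i*\bm g_\ell$ depend only on $i+\ell$, collapsing to the $2k-1$ evaluation vectors $(v_t^2\alpha_t^{m})_t$ for $m=0,\ldots,2k-2$; this is exactly why a GRS square code has dimension $2k-1$. The plan is to exhibit \emph{one} extra linearly independent vector in $\mathcal{C}^2$ coming from the $\beta$-twist. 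The designated vector is $\bm e=\bm g_j*\bm g_{j+2}-\bm g_{j+1}*\bm g_{j+1}$, which for the untwisted code is identically zero (both terms evaluate to $(v_t^2\alpha_t^{2j})_t$). After twisting, $\bm e$ becomes a nonzero vector lying in $\mathcal{C}^2$, and by hypothesis it has an entry of the form $a\beta$ or $a\beta\pm\beta^2$ with $a\in\mathbb{F}_q^*$. I would then argue that this $\bm e$ cannot lie in the $\mathbb{F}_{q^b}$-span of the $2k-1$ "GRS-type'' products $\bm g_i*\bm g_\ell$. The degree-in-$\beta$ bookkeeping is the mechanism: the GRS-type products, computed over $\mathbb{F}_{q^b}$, have entries that are $\mathbb{F}_q$-polynomials in $\beta$ of degree at most roughly $2$ (since each row of the twisted matrix has entries that are $\mathbb{F}_q$-affine, or low-degree, in $\beta$), and the condition that $\deg$ of the minimal polynomial of $\beta$ exceeds $4k$ ensures that all the relevant $\mathbb{F}_q$-polynomial identities in $\beta$ that would be forced by a linear dependence must already hold as polynomial identities, i.e.\ coefficient-by-coefficient in the powers $1,\beta,\beta^2,\ldots$. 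This linear-algebra-over-$\mathbb{F}_q[\beta]$ argument is what converts the single distinguished entry of $\bm e$ into genuine $\mathbb{F}_{q^b}$-linear independence, giving $\dim(\mathcal{C}^2)\ge 2k$.

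The main obstacle I anticipate is exactly this independence step: making precise why no $\mathbb{F}_{q^b}$-linear combination of the $\binom{k+1}{2}$ generating Schur products can equal $\bm e$ while being consistent with $\bm e$ having an entry whose $\beta$-expansion contains the specific monomial $a\beta$ (resp.\ $a\beta\pm\beta^2$). One must separate the contributions by their degree in $\beta$: because $\deg$ of the minimal polynomial of $\beta$ is larger than $4k$ (and each entry of each generating product is a polynomial in $\beta$ of degree at most $4$, with total degrees in any linear combination staying below $4k$), the representation of any element of $\mathbb{F}_{q^b}$ as such a polynomial in $\beta$ is \emph{unique}, so one may legitimately match coefficients of like powers of $\beta$. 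Projecting onto the $\beta^1$ (and, if needed, $\beta^2$) component and using that the untwisted GRS-type products contribute only the $\beta^0$ part in the relevant coordinate forces a contradiction with the presence of $a\beta$. I would finally remark that the phrase ``there exists some $\bm v$'' reflects that one may need to avoid the measure-zero set of $\bm v$ for which the distinguished entry of $\bm e$ accidentally cancels or for which the square-code dimension drops back to $2k-1$; choosing $\bm v$ generically (equivalently, avoiding the zero set of finitely many nonzero polynomial conditions in the $v_t$) secures both the MDS property and $\dim(\mathcal{C}^2)\ge 2k$, hence by Proposition~\ref{pro-zy} that $\mathcal{C}$ is a non-GRS MDS code.
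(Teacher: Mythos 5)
Your skeleton is the same as the paper's (Theorem~\ref{th31} gives MDS because the relevant determinants have degree at most $k<4k$; the vector $\bm e$ is the designated extra element of $\mathcal{C}^2$; Proposition~\ref{pro-zy} converts $\dim(\mathcal{C}^2)\geq 2k$ into non-GRS; and $\bm v$ is adjusted at the end), but the step you yourself call the heart of the argument contains a genuine gap. You propose to rule out a dependence $\bm e=\sum_i c_i\,(\bm g_{j_i}*\bm g_{\ell_i})$ by matching coefficients of like powers of $\beta$, asserting that ``total degrees in any linear combination stay below $4k$.'' That assertion is false: the scalars $c_i$ of a putative dependence range over all of $\mathbb{F}_{q^b}$, and they are not low-degree polynomials in $\beta$ --- elements of $\mathbb{F}_{q^b}\setminus\mathbb{F}_q(\beta)$ are not polynomials in $\beta$ at all, and even elements of $\mathbb{F}_q(\beta)$ have $\beta$-degree up to one less than the degree of the minimal polynomial, far above $4k$. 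So an $\mathbb{F}_{q^b}$-linear combination of vectors whose entries are degree-$\leq 2$ polynomials in $\beta$ has no bounded $\beta$-degree, uniqueness of representation does not apply, and ``projecting onto the $\beta^1$ component'' is not a defined operation. A secondary hole: showing $\bm e$ lies outside the span of the other $2k-1$ Schur products only yields $\dim(\mathcal{C}^2)\geq\dim(\mathrm{span})+1$; you would also need those $2k-1$ products (of the \emph{twisted} rows, which themselves carry $\beta$-perturbations) to be linearly independent, which is not free.

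The paper's proof supplies exactly the missing device: a determinant. It stacks the $2k$ vectors as rows of a matrix $S$; their independence over $\mathbb{F}_{q^b}$ is equivalent to some $2k\times 2k$ minor of $S$ being nonzero, and a minor --- unlike a linear combination --- is intrinsically $g(\beta)$ for a polynomial $g\in\mathbb{F}_q[x]$ of degree at most $4k$ (each entry of $S$ has $\beta$-degree at most $2$), so the hypothesis on the minimal polynomial reduces everything to $g\neq 0$ in $\mathbb{F}_q[x]$, with no scalars from $\mathbb{F}_{q^b}$ ever appearing. Nonvanishing of $g$ is then proved by multilinear expansion over the first $2k$ columns: submatrices using at least two columns of the ``error'' matrix contribute only $\beta^2h(\beta)$; the pure Vandermonde term vanishes because the row coming from $\bm e$ has zero untwisted part; and the submatrices using exactly one error column sum to $\prod_{i}v_i^2\bigl(\sum_j v_{s_j}^{-1}a_j\bigr)\beta+c\beta^2$, where each $a_j\neq 0$ is a Vandermonde determinant depending only on $\bm\alpha$. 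If $\sum_j v_{s_j}^{-1}a_j\neq 0$, then $g(\beta)=a\beta+\beta^2h'(\beta)\neq 0$; if it vanishes, changing the single coordinate $v_{s_1}$ makes it nonzero --- this is the concrete content of ``there exists some $\bm v$,'' whereas your generic-position appeal presupposes that the conditions to be avoided are nonzero polynomials in the $v_t$, which is precisely the point at issue. To repair your write-up you would need to replace coefficient matching by this determinant (or an equivalent rank-over-$\mathbb{F}_q(\beta)$) argument.
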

\begin{proof}
It is clear that the determinant of any $k\times k$ submatrix of  $G_{\bm\alpha,\bm v} + \sum_{(i,j)\in\mathcal{J}}x E_{ij}$  will result in a polynomial in $\FF_q[x]$ of degree at most $k$. By Theorem \ref{th31},  $\mathcal{C}$ is an MDS code.

Denote by $\bm z_{i}$ the $i$th row of the matrix $\sum_{(i,j)\in\mathcal{J}}\beta E_{ij}$ whose coefficients are either $0$ or $\beta$ by assumption:
$$\sum_{(i,j)\in\mathcal{J}}\beta E_{ij}=\left(
\begin{matrix}
\bm z_{1}\\
\bm z_{2}\\
\vdots\\
\bm z_{k}
\end{matrix}
\right).
$$

Recall that $\bm v=(v_{1},v_{2},\dots,v_{n})$. Set $\bm\alpha^{i}=(\alpha_{1}^{i},\alpha_{2}^{i},\dots,\alpha_{n}^{i})$ where $\infty^{i}=0$ when $0\leq i\leq k-2$ and $\infty^{k-1}=1$ and similarly $\bm v^{2}=(v_{1}^{2},v_{2}^{2},\dots,v_{n}^{2})$. Then the $j$th row of  $G_{\bm\alpha,\bm v} + \sum_{(i,j)\in\mathcal{J}}\beta E_{ij}$ is $\bm v*\bm\alpha^{j-1}+\bm z_{j}$.
Then
\begin{equation*}
\begin{split}
\bm e=&\bm g_{j}*\bm g_{j+2}-\bm g_{j+1}*\bm g_{j+1}\\
=&(\bm v*\bm\alpha^{j-1}+\bm z_{j})*(\bm v*\bm\alpha^{j+1}+\bm z_{j+2})-(\bm v*\bm\alpha^{j}+\bm z_{j+1})*(\bm v*\bm\alpha^{j}+\bm z_{j+1})\\
=&\bm v*\bm\alpha^{j+1}*\bm z_{j}+\bm v*\bm\alpha^{j-1}*\bm z_{j+2}+\bm z_{j}*\bm z_{j+2}-2\bm v*\bm\alpha^{j}*\bm z_{j+1}-\bm z_{j+1}*\bm z_{j+1}\\
=&\bm v*\bm\alpha^{j-1}*(\bm\alpha^{2}*\bm z_{j}+\bm z_{j+2}-2\bm\alpha*\bm z_{j+1})+(\bm z_{j}*\bm z_{j+2}-\bm z_{j+1}*\bm z_{j+1})
\end{split}
\end{equation*}
belongs to $\mathcal{C}^{2}$.

Since each entry of $\bm z_{j},\bm z_{j+1},\bm z_{j+2}$ is either $0$ or $\beta$, the entries of $\bm v*\bm\alpha^{j-1}*(\bm\alpha^{2}*\bm z_{j}+\bm z_{j+2}-2\bm\alpha*\bm z_{j+1})$ are of the form $b\beta$ for some $b\in\mathbb{F}_{q}$ possibly zero. Those of $\bm z_{j}*\bm z_{j+2}-\bm z_{j+1}*\bm z_{j+1}$ are either $0$ or $\pm\beta^{2}$.
Thus the entries of $\bm e$ take value in $\{0,b\beta, \pm\beta^2, b\beta \pm\beta^2\}$
for some $b\in\mathbb{F}_{q}$. And by assumption, there is at least one entry of $\bm e$ is $b\beta$ or $b\beta \pm\beta^2$ for some nonzero $b\in\mathbb{F}_{q}$.

Then $\mathcal{C}^2$ contains $2k$ vectors, written as the $2k$ rows of the following matrix:
$$
S=\left(
\begin{matrix}
v^2_{1}& v^2_{2}&\dots&v^2_{n}\\
v^2_{1}\alpha_{1}&v^2_{2}\alpha_{2}&\dots&v^2_{n}\alpha_{n}\\
 \vdots & \vdots & &\vdots \\
v^2_{1}\alpha_{1}^{2k-2}&v^2_{2}\alpha_{2}^{2k-2}&\dots &v^2_{n}\alpha_{n}^{2k-2}\\
0& 0 & \ldots     & 0
\end{matrix}
\right)+\left(
\begin{matrix}
\bm m_{1}\\
\bm m_{2}\\
\vdots\\
\bm m_{2k-1}\\
\bm e
\end{matrix}
\right)
$$
where the entries of each $\bm m_{j}$ also take value in $\{0,b\beta, \pm\beta^2, b\beta \pm\beta^2\}$ for some $b\in\mathbb{F}_{q}$ and $\infty^{i}=0$ for $0\leq i\leq 2k-3$ and $\infty^{2k-2}=1$.

First, it is clear that the determinant of any $2k\times 2k$ submatrix of  $S$  will result in a polynomial in $\beta$ over $\FF_q$ of degree at most $4k$. By assumption, if there exists at least one $2k\times 2k$ submatrix of $S$ which results in a non-zero polynomial in $\beta$ then  $\mathcal{C}^2$ contains $2k$ linear independent vectors which implies $\mathcal{C}$ is a non-GRS code via Proposition \ref{pro-zy}.

Without loss of generality, we suppose the first $2k$ entries of $\bm e$ contain at least one $b\beta$ or $b\beta \pm\beta^2$ with $b\neq0$.

Similar to the proof of Proposition \ref{pro31}, we calculate the determinant of $S_{2k}$ which is the $2k\times 2k$ submatrix consisting of the first $2k$ columns of $S$ by multilinearity.

Denote $S_{2k}$ by $M_{\bm\alpha,\bm v}+M$ where
$$
M_{\bm\alpha,\bm v}=\left(
\begin{matrix}
v^2_{1}& v^2_{2}&\dots&v^2_{2k}\\
v^2_{1}\alpha_{1}&v^2_{2}\alpha_{2}&\dots&v^2_{2k}\alpha_{2k}\\
 \vdots & \vdots & &\vdots \\
v^2_{1}\alpha_{1}^{2k-2}&v^2_{2}\alpha_{2}^{2k-2}&\dots &v^2_{2k}\alpha_{n}^{2k-2}\\
0& 0 & \ldots     & 0
\end{matrix}
\right)
$$
and
$$
M=\left(
\begin{matrix}
\bm m^{'}_{1}\\
\bm m^{'}_{2}\\
\vdots\\
\bm m^{'}_{2k-1}\\
\bm e^{'}
\end{matrix}
\right)
$$
where $\bm m^{'}_{i}$ is the vector obtained from deleting the last $n-2k$ coordinates of $\bm m_{i}$ for $1\leq i\leq 2k-1$ and $\bm e^{'}$ is the vector obtained from deleting the last $n-2k$ coordinates of $\bm e$.

Write
\[
M_{\bm\alpha,\bm v}+M
=
\begin{pmatrix}
{\bf r}_1 + {\bf t}_1 &  {\bf r}_2 + {\bf t}_2 &
\ldots &
{\bf r_{2k}} + {\bf t_{2k}}
\end{pmatrix},
\]
for ${\bf r}_1,\ldots,{\bf r_{2k}}$ columns of $M_{\bm\alpha,\bm v}$
and ${\bf t}_1,\ldots,{\bf t_{2k}}$ columns of $M$.

By multilinearity of the determinant,
\[
\det(M_{\bm\alpha,\bm v}+M)
=
\sum_{i=1}^{2^{2k}}\det(P^{(i)})
\]
where each $P^{(i)}$ is a matrix whose $j$th column is either ${\bf r}_{j}$ or ${\bf t}_{j}$.

For the $P^{(i)}$ whose columns contain at least $2$ vectors from $M$, their determinants are $\beta^{2}h(\beta)$. And for the $2k$ matrices whose columns contain only one vector from $M$, the sum of their determinant is
$$
\prod_{i}^{2k}v_{i}^{2}(v_{s_{1}}^{-1}a_{1}\beta+v_{s_{2}}^{-1}a_{2}\beta+\dots+v_{s_{\ell}}^{-1}a_{\ell}\beta)+c\beta^{2}
$$
where $s_{i}$ are all the indices of $\bm e^{'}$ whose corresponding entries are $b\beta$ or $d\beta\pm\beta^{2}$ with $bd\neq0$, since $\bm e=\bm v*\bm\alpha^{j-1}*(\bm\alpha^{2}*\bm z_{j}+\bm z_{j+2}-2\bm\alpha*\bm z_{j+1})+(\bm z_{j}*\bm z_{j+2}-\bm z_{j+1}*\bm z_{j+1})$ that implies all nonzero terms $b\beta$ come from $\bm v*\bm\alpha^{j-1}*(\bm\alpha^{2}*\bm z_{j}+\bm z_{j+2}-2\bm\alpha*\bm z_{j+1})$ so that all $a_{i}$ are just related to $\bm\alpha$ and independent to $\bm v$ and all $a_{i}\neq0$ since the $a_{i}$ are determinant of Vandermonde matrices.

If $$
\prod_{i}^{2k}v_{i}^{2}(v_{s_{1}}^{-1}a_{1}\beta+v_{s_{2}}^{-1}a_{2}\beta+\dots+v_{s_{\ell}}^{-1}a_{\ell}\beta)\neq 0
$$
then $\det(M_{\bm\alpha,\bm v}+M)=a\beta+\beta^{2}h^{'}(\beta)\neq0$.

If not, we can choose other $\bm v^{'}$ such that $v^{'}_{s_{1}}\neq v_{s_{1}}$ and others are same. Then for the $\bm v^{'}$, $\det(M_{\bm\alpha,\bm v^{'}}+M)\neq0$ which complete the proof.

\end{proof}
\begin{Corollary}
With the same notations as Proposition \ref{prop1} and its proof, suppose $0,\infty\notin\{\alpha_{1},\alpha_{2},\dots,\alpha_{n}\}$. Then for some fixed $j_{0}$, let $\bm z_{j_{0}}\neq\bm 0$ and $\bm z_{j_{0}+1}=\bm z_{j_{0}+2}=\bm 0$ and $\bm z_{i}$ be arbitrary for any $i\neq j_{0},j_{0}+1,j_{0}+2$, then we get non-GRS MDS codes over a large enough finite field.
\end{Corollary}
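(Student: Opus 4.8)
The plan is to show that the prescribed choice of the $\bm z_i$'s forces the vector $\bm e$ from Proposition \ref{prop1} to contain an entry of the pure form $a\beta$ with $a\neq 0$, so that the corollary becomes an immediate specialization of that proposition.

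First I would substitute the hypotheses $\bm z_{j_0+1}=\bm z_{j_0+2}=\bm 0$ into the expression for $\bm e$ obtained in the proof of Proposition \ref{prop1}, namely
\[
\bm e=\bm v*\bm\alpha^{j_0-1}*\bigl(\bm\alpha^{2}*\bm z_{j_0}+\bm z_{j_0+2}-2\bm\alpha*\bm z_{j_0+1}\bigr)+\bigl(\bm z_{j_0}*\bm z_{j_0+2}-\bm z_{j_0+1}*\bm z_{j_0+1}\bigr).
\]
Since $\bm z_{j_0+1}$ and $\bm z_{j_0+2}$ vanish, the term $-2\bm\alpha*\bm z_{j_0+1}$ and the entire quadratic piece $\bm z_{j_0}*\bm z_{j_0+2}-\bm z_{j_0+1}*\bm z_{j_0+1}$ drop out, leaving the clean expression $\bm e=\bm v*\bm\alpha^{j_0+1}*\bm z_{j_0}$. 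This is the crux: annihilating the two neighbouring perturbation rows removes every $\beta^{2}$ contribution, so that $\bm e$ is purely linear in $\beta$.

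Next, since $\mathcal{J}\neq\emptyset$ and $\bm z_{j_0}\neq\bm 0$, I would pick an index $m$ with $(\bm z_{j_0})_m=\beta$. The $m$th entry of $\bm e$ is then $v_m\alpha_m^{j_0+1}\beta=a\beta$ with $a=v_m\alpha_m^{j_0+1}$. Here the assumption $0,\infty\notin\{\alpha_{1},\dots,\alpha_{n}\}$ is exactly what guarantees that $\alpha_m$ is a finite nonzero field element, so $\alpha_m^{j_0+1}\neq 0$; combined with $v_m\neq 0$ this yields $a\neq 0$. Hence $\bm e$ contains an entry of the form $a\beta$ with nonzero $a\in\mathbb{F}_{q}$, which is precisely the standing hypothesis of Proposition \ref{prop1}. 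Note that the rows $\bm z_i$ for $i\neq j_0,j_0+1,j_0+2$ may indeed be arbitrary, since $\bm e$ depends only on rows $j_0,j_0+1,j_0+2$.

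Finally I would invoke Proposition \ref{prop1}: choosing $b$ large enough that $\beta\in\mathbb{F}_{q^{b}}\setminus\mathbb{F}_{q}$ has minimal polynomial of degree strictly greater than $4k$ over $\mathbb{F}_{q}$ --- always achievable over a sufficiently large extension --- the proposition furnishes a $\bm v$ for which $\mathcal{C}$ is a non-GRS MDS code. I expect no genuine obstacle: the whole content lies in the observation that killing $\bm z_{j_0+1},\bm z_{j_0+2}$ collapses $\bm e$ to $\bm v*\bm\alpha^{j_0+1}*\bm z_{j_0}$, and that $0\notin\{\alpha_i\}$ prevents the coefficient $\alpha_m^{j_0+1}$ from vanishing. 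The only mild point to verify is the existence of a valid $j_0$, which holds because $k\geq 3$ makes the range $1\leq j_0\leq k-2$ non-vacuous.
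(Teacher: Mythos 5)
Your proof is correct and follows essentially the same route as the paper: the paper's own (terser) argument likewise specializes the expression for $\bm e$ from Proposition \ref{prop1}, observes that with $\bm z_{j_0+1}=\bm z_{j_0+2}=\bm 0$ only the term $\bm v*\bm\alpha^{j_0+1}*\bm z_{j_0}$ survives (which is nonzero since $v_m\neq 0$ and $\alpha_m\neq 0,\infty$), and then invokes the proposition over a large enough extension field. Your write-up simply makes explicit the cancellation of the $\beta^2$ terms and the nonvanishing of the coefficient $a$, which the paper leaves implicit.
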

\begin{proof}
From the proof of Proposition \ref{prop1}, if $\bm v*\bm\alpha^{j-1}*(\bm\alpha^{2}*\bm z_{j_{0}}+\bm z_{j_{0}+2}-2\bm\alpha*\bm z_{j_{0}+1})\neq \bm 0$ then $\bm e$ satisfies the assumption. Then for a suitable $\bm v$, we can get non-GRS MDS codes over a large enough finite field.
\end{proof}
\begin{Example}
Let $q=11^{13}$, $\mathbb{F}_{11^{13}}=\mathbb{F}_{11}(\theta)$. Let $\bm v=\bm 1$.
When
$\bm\alpha=(1,2,3,4,5,6,7)$ then
$$
\left(
\begin{matrix}
1+\theta&1+\theta&1&1&1+\theta&1&1\\
1&2&3&4&5&6&7\\
1&4&9&5&3&3&5
\end{matrix}
\right)
$$
generates a $[7,3]$ non-GRS MDS code over $\mathbb{F}_{11^{13}}$.
\end{Example}

Next, we also give two families of non-GRS MDS codes with less constraint on the degree of field extension.
\begin{Proposition}\label{pro33}
Let $3\leq k\leq \frac{n-1}{2}$ and $\beta\in\mathbb{F}_{q^{b}}$ but $\beta\not\in \mathbb{F}_{q}$. Let the linear code $\mathcal{C}$ over $\mathbb{F}_{q^{b}}$ be generated by the matrix $G_{\bm\alpha,\bm v} + \sum_{i\in \mathcal{I}}\beta E_{i1}$
for $\mathcal{I}$ containing at least one instance of three consecutive indices,
and $\alpha_{1}\neq0,1,\infty$. Then $\mathcal{C}$ is a non-GRS MDS code.
\end{Proposition}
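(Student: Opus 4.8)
The plan is to treat the two assertions separately, and to exploit the fact that here all perturbations sit in the first column. For the MDS property I would simply invoke Theorem \ref{th31}. The key observation is that the polynomial version of the perturbation is $\sum_{i\in\mathcal{I}}x\,E_{i1}$, which only modifies the first column of $G_{\bm\alpha,\bm v}$. Hence, for any choice of $k$ columns, the associated matrix agrees with a submatrix of $G_{\bm\alpha,\bm v}$ except possibly in its first column, whose entries are affine in $x$. Since the determinant is multilinear (in particular linear in that single column), the resulting polynomial has degree at most $1$ in $x$. As $\beta\notin\mathbb{F}_q$, the minimal polynomial of $\beta$ over $\mathbb{F}_q$ has degree at least $2>1$, so Theorem \ref{th31} applies and $\mathcal{C}$ is MDS. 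This degree-$1$ bound is exactly what makes the field-extension requirement so mild compared with Proposition \ref{prop1}.

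For the non-GRS property I would use Proposition \ref{pro-zy} and show $\dim(\mathcal{C}^2)=2k\neq 2k-1$. Writing $\bm g_i=\bm v*\bm\alpha^{i-1}+\bm z_i$ as in the proof of Proposition \ref{prop1}, here every $\bm z_i$ is supported on the first coordinate (equal to $\beta$ there if $i\in\mathcal{I}$, and $\bm 0$ otherwise). Consequently each generating product satisfies $\bm g_i*\bm g_\ell=\bm v^2*\bm\alpha^{i+\ell-2}+c_{i\ell}\,\bm\epsilon_1$ for some scalar $c_{i\ell}$, where $\bm\epsilon_1=(1,0,\dots,0)$. Therefore $\mathcal{C}^2\subseteq W+\langle\bm\epsilon_1\rangle_{\mathbb{F}_{q^b}}$, where $W=\langle\, \bm v^2*\bm\alpha^m : 0\le m\le 2k-2\,\rangle_{\mathbb{F}_{q^b}}=GRS_{n,2k-1}(\bm\alpha,\bm v^2)$ is the square of the underlying GRS code; since $\dim W=2k-1$, this already gives $\dim(\mathcal{C}^2)\le 2k$.

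To pin the dimension down, I would exhibit a product lying outside $W$. Picking three consecutive indices $j,j+1,j+2\in\mathcal{I}$ (available by hypothesis, using $k\ge 3$), form $\bm e=\bm g_j*\bm g_{j+2}-\bm g_{j+1}*\bm g_{j+1}\in\mathcal{C}^2$. Reusing the expansion from the proof of Proposition \ref{prop1} with $\bm z_j=\bm z_{j+1}=\bm z_{j+2}=\beta\bm\epsilon_1$, the Vandermonde part $\bm v^2*\bm\alpha^{2j-2}$ cancels, and the $\beta^2$-contribution $\bm z_j*\bm z_{j+2}-\bm z_{j+1}*\bm z_{j+1}$ also cancels, leaving $\bm e=v_1\alpha_1^{j-1}(\alpha_1-1)^2\beta\,\bm\epsilon_1$. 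The exclusions $\alpha_1\neq 0,1,\infty$, together with $v_1\neq 0$ and $\beta\neq 0$, guarantee this coefficient is nonzero, so $\bm e$ is a nonzero multiple of $\bm\epsilon_1$ and $\bm\epsilon_1\in\mathcal{C}^2$. Subtracting the $c_{i\ell}\bm\epsilon_1$ corrections from the products $\bm g_i*\bm g_\ell$ then shows $\bm v^2*\bm\alpha^m\in\mathcal{C}^2$ for all $0\le m\le 2k-2$, whence $W\subseteq\mathcal{C}^2$ and $\mathcal{C}^2=W+\langle\bm\epsilon_1\rangle_{\mathbb{F}_{q^b}}$.

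Finally I would check $\bm\epsilon_1\notin W$, so that $\dim(\mathcal{C}^2)=2k$. This is where the MDS structure enters: $W=GRS_{n,2k-1}(\bm\alpha,\bm v^2)$ is MDS with minimum distance $n-2k+2\ge 3$ (using $k\le\frac{n-1}{2}$), so it contains no nonzero word of weight $1$, whereas $\bm\epsilon_1$ has weight $1$. Thus $\dim(\mathcal{C}^2)=2k\neq 2k-1$, and Proposition \ref{pro-zy} shows $\mathcal{C}$ is non-GRS. I expect the only genuinely delicate step to be the nonvanishing check on the coefficient of $\bm e$, where the exclusions on $\alpha_1$ are essential; the rest is bookkeeping built on the single structural insight that every $\beta$-correction in $\mathcal{C}^2$ is localized to the first coordinate, which is precisely what both keeps $\mathcal{C}^2$ squeezed between $W$ and $W+\langle\bm\epsilon_1\rangle_{\mathbb{F}_{q^b}}$ and drives the field-extension requirement down to merely $\beta\notin\mathbb{F}_q$.
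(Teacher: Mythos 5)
Your proof is correct, and its skeleton matches the paper's: the MDS part (degree-$1$ determinants, minimal polynomial of degree $\geq 2$, Theorem \ref{th31}) is identical, and your key vector $\bm e=\bm g_{j}*\bm g_{j+2}-\bm g_{j+1}*\bm g_{j+1}=v_{1}\beta\alpha_{1}^{j-1}(\alpha_{1}-1)^{2}\bm\epsilon_{1}$ is exactly the last row the paper builds from the three consecutive indices. Where you genuinely diverge is the final linear-independence step. The paper assembles a $2k\times n$ matrix from $2k-1$ Schur products plus $\bm e$ and asserts that the determinant of its first $2k$ columns is non-zero (a claim it does not expand, though it follows by expanding along the last row and invoking a Vandermonde minor in $\alpha_{2},\dots,\alpha_{2k}$). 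You instead prove the sandwich $W\subseteq\mathcal{C}^{2}\subseteq W+\langle\bm\epsilon_{1}\rangle_{\mathbb{F}_{q^{b}}}$ with $W=GRS_{n,2k-1}(\bm\alpha,\bm v^{2})$, show $\bm\epsilon_{1}\in\mathcal{C}^{2}$ via $\bm e$, and rule out $\bm\epsilon_{1}\in W$ by the minimum distance $n-2k+2\geq 3$ of the MDS code $W$. This buys you two things: a fully rigorous replacement for the paper's unproven determinant claim, and the exact value $\dim(\mathcal{C}^{2})=2k$ rather than only the lower bound $\geq 2k$. One small point you should make explicit: the proposition permits $\infty$ among $\alpha_{2},\dots,\alpha_{n}$, and the paper dispatches that case separately (the code is then an extension of a non-GRS code, and puncturing a GRS code yields a GRS code). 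Your argument does cover it, but only once you adopt the conventions of Proposition \ref{prop1} ($\infty^{m}=0$ for $m\leq 2k-3$, $\infty^{2k-2}=1$), under which the identity $\bm g_{i}*\bm g_{\ell}=\bm v^{2}*\bm\alpha^{i+\ell-2}+c_{i\ell}\bm\epsilon_{1}$ and the identification of $W$ with the (extended, still MDS) GRS code remain valid; as written, this case is passed over in silence.
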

\begin{proof}
Since $\beta\in\mathbb{F}_{q^{b}}$ but $\beta\not\in\mathbb{F}_{q}$, the degree of the minimal polynomial of $\beta$ over $\mathbb{F}_{q}$ is at least $2$. Since only the first column of $G_{\bm\alpha,\bm v} + \sum_{i\in\mathcal{I}}x E_{i1}$ contains terms in $x$, the determinant of any $k\times k$ submatrix will result in a polynomial in $\FF_q[x]$ of degree at most 1. By Theorem \ref{th31},  $\mathcal{C}$ is an MDS code.

We first suppose $\infty\notin\{\alpha_{1},\dots,\alpha_{n}\}$, then
$$
\left\{
\begin{array}{ll}
(v_{1}\alpha_{1}^{i-1}+\beta,v_{2}\alpha_{2}^{i-1},\dots,v_{n}\alpha_{n}^{i-1}) & \mbox{ if }i\in \mathcal{I}\\
(v_{1}\alpha_{1}^{i-1},v_{2}\alpha_{2}^{i-1},\dots,v_{n}\alpha_{n}^{i-1}) & \mbox{ else}\\
\end{array}
\right.
$$
is the $i$th row of $G_{\alpha,v} + \sum_{i\in\mathcal{I}}\beta E_{i1}$, $1\leq i\leq k$, and we know that $\mathcal{C}^{2}$ is generated over $\mathbb{F}_{q^{b}}$ by the Schur product of these basis vectors, namely
$$
\left\{
\begin{array}{ll}
((v_1\alpha_1^{s-1}+\beta)(v_{1}\alpha_{1}^{i-1}+\beta),v^2_{2}\alpha_{2}^{i+s-2},\dots,v^2_{n}\alpha_{n}^{i+s-2}) & \mbox{ if }i,s\in \mathcal{I}\\
(v_1\alpha_1^{s-1}(v_{1}\alpha_{1}^{i-1}+\beta),v^2_{2}\alpha_{2}^{i+s-2},\dots,v^2_{n}\alpha_{n}^{i+s-2}) & \mbox{ if only }i\in \mathcal{I}\\
(v^2_{1}\alpha_{1}^{i+s-2},v^2_{2}\alpha_{2}^{i+s-2},\dots,v^2_{n}\alpha_{n}^{i+s-2}) & \mbox{else}.
\end{array}
\right.
$$

Then $\mathcal{C}^2$ contains $2k-1$ vectors, written as the first $2k-1$ rows of the following matrix:
$$
\begin{pmatrix}
*& v^2_{2}&\dots&v^2_{n}\\
*&v^2_{2}\alpha_{2}&\dots&v^2_{n}\alpha_{n}\\
 \vdots & \vdots & &\vdots \\
*&v^2_{2}\alpha_{2}^{2k-2}&\dots &v^2_{n}\alpha_{n}^{2k-2}\\
v_1\beta\alpha_1^{l-1}(\alpha_1-1)^2 & 0 & \ldots     & 0
\end{pmatrix}
$$
where $*$ denotes some element in $\mathbb{F}_{q^{b}}$ and while the last row is computed from the first case, using the three consecutive indices $l$, $l+1$, $l+2$ in $\mathcal{I}$: for $i=l$ and $s=l+2$, we have
$$
(v_1^2\alpha_1^{2l}+v_1\beta(\alpha_1^{l+1}+ \alpha_{1}^{l-1})+\beta^2,v^2_{2}\alpha_{2}^{2l},\dots,v^2_{n}\alpha_{n}^{2l})
$$
and for $i=s=l+1$,
$$
(v_1^2\alpha_1^{2l}+2v_1\beta\alpha_1^{l}+\beta^2,v^2_{2}\alpha_{2}^{2l},\dots,v^2_{n}\alpha_{n}^{2l})
$$
so their difference gives the last row.


Using the assumptions on $\alpha_1$, we claim that the above $2k$ vectors are linearly independent since the determinant of the first $2k$ columns of the above matrix
is non-zero. Thus $\dim(\mathcal{C}^{2})\geq 2k$ which implies $\mathcal{C}$ is non-GRS via Proposition \ref{pro-zy}.

For the case $\infty\in\{\alpha_{1},\dots,\alpha_{n}\}$, $\mathcal{C}$ is a non-GRS code since it is an extended code of the above non-GRS MDS code.
\end{proof}

\begin{Example}
Let $q=7^{2}$, $\mathbb{F}_{7^{2}}=\mathbb{F}_{7}(\theta)$ with $\theta^{2}+2=0$. Let $\bm v=\bm 1$.
When
$\bm\alpha=(2,3,4,5,6,1,0)$ then
$$
\left(
\begin{matrix}
1+\theta&1&1&1&1&1&1\\
2+\theta&3&4&5&6&1&0\\
4+\theta&2&2&4&1&1&0
\end{matrix}
\right)
$$
and when
$\bm\alpha=(2,3,4,5,6,1,0,\infty)$
$$
\left(
\begin{matrix}
1+\theta&1&1&1&1&1&1&0\\
2+\theta&3&4&5&6&1&0&0\\
4+\theta&2&2&4&1&1&0&1
\end{matrix}
\right).
$$
Theses matrices
generate a $[7,3]$ and a $[8,3]$ non-GRS MDS code over $\mathbb{F}_{49}$ respectively.
\end{Example}

\begin{Proposition}\label{pro34}
Let $3\leq k\leq \frac{n-1}{2}$ and $\beta\in\mathbb{F}_{q^{b}}$ but $\beta\not\in\mathbb{F}_{q}$. Let the linear code $\mathcal{C}$ over $\mathbb{F}_{q^{b}}$ be generated by the matrix $G_{\bm\alpha,\bm v} + \beta E_{11}$ where $\alpha_{1}\neq0$. Then $\mathcal{C}$ is an non-GRS MDS code.
\end{Proposition}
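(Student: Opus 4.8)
The plan is to establish the two properties separately, using Theorem~\ref{th31} for the MDS claim and the square-code criterion of Proposition~\ref{pro-zy} for the non-GRS claim, following the template of Proposition~\ref{pro33} but exploiting that here only the single entry $(1,1)$ is perturbed. For the MDS part, I would replace $\beta$ by the indeterminate $x$ and look at $G_{\bm\alpha,\bm v}+xE_{11}$. Since $x$ occurs in exactly one entry, expanding any $k\times k$ subdeterminant along the first column shows it is a polynomial in $\FF_q[x]$ of degree at most $1$; as $\beta\notin\FF_q$ its minimal polynomial over $\FF_q$ has degree at least $2>1$, so Theorem~\ref{th31} immediately gives that $\mathcal{C}$ is MDS.

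For the non-GRS part I would first treat $\infty\notin\{\alpha_{1},\dots,\alpha_{n}\}$. Writing $\bm e_{1}=(1,0,\dots,0)$, the rows of $G_{\bm\alpha,\bm v}+\beta E_{11}$ are $\bm g_{1}=\bm v+\beta\bm e_{1}$ and $\bm g_{i}=\bm v*\bm\alpha^{i-1}$ for $2\le i\le k$. Computing the Schur products that span $\mathcal{C}^{2}$ and abbreviating $\bm w_{m}:=\bm v^{2}*\bm\alpha^{m}$, I would record that every $\beta$-contribution lands in the first coordinate only:
\begin{align*}
\bm g_{1}*\bm g_{1} &= \bm w_{0} + (2v_{1}\beta+\beta^{2})\bm e_{1},\\
\bm g_{1}*\bm g_{s} &= \bm w_{s-1} + \beta v_{1}\alpha_{1}^{\,s-1}\bm e_{1} \qquad (2\le s\le k),\\
\bm g_{i}*\bm g_{s} &= \bm w_{i+s-2} \qquad (2\le i\le s\le k).
\end{align*}
The decisive move is that $\bm w_{2}\in\mathcal{C}^{2}$ (take $i=s=2$) while $\bm g_{1}*\bm g_{3}=\bm w_{2}+\beta v_{1}\alpha_{1}^{2}\bm e_{1}\in\mathcal{C}^{2}$ (this is where $k\ge 3$ is used, so that a third row exists); subtracting yields $\beta v_{1}\alpha_{1}^{2}\bm e_{1}\in\mathcal{C}^{2}$, and since $\beta$, $v_{1}$ and $\alpha_{1}$ are all nonzero, $\bm e_{1}\in\mathcal{C}^{2}$. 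Feeding $\bm e_{1}$ back into the displayed identities then recovers every $\bm w_{m}$ with $0\le m\le 2k-2$, so $\mathcal{C}^{2}\supseteq\langle \bm w_{0},\dots,\bm w_{2k-2},\bm e_{1}\rangle$.

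The last step is a dimension count. The $2k-1$ vectors $\bm w_{0},\dots,\bm w_{2k-2}$ form a generalized Vandermonde family with distinct nodes and nonzero weights $v_{i}^{2}$, so (using $2k-1\le n$) they are linearly independent. Moreover $\bm e_{1}$ is not in their span: any combination $\sum_{m}c_{m}\bm w_{m}$ equals $\bm v^{2}*(p(\alpha_{1}),\dots,p(\alpha_{n}))$ for some $p$ with $\deg p\le 2k-2$, and being equal to $\bm e_{1}$ would force $p$ to vanish at the $n-1$ distinct nodes $\alpha_{2},\dots,\alpha_{n}$; since $2k-2<n-1$ this forces $p\equiv 0$, contradicting the nonzero first coordinate. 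Hence $\dim\mathcal{C}^{2}\ge 2k>2k-1=2\dim\mathcal{C}-1$, and Proposition~\ref{pro-zy} shows $\mathcal{C}$ is not GRS. I expect the main obstacle to be precisely this final separation argument, namely certifying that $\bm e_{1}$ genuinely enlarges the Vandermonde span rather than being absorbed into it; note that the hypotheses $k\ge 3$ (coexistence of the two expressions for $\bm w_{2}$) and $\alpha_{1}\ne 0$ (nonvanishing of the coefficient $\beta v_{1}\alpha_{1}^{2}$) are exactly what make the extraction of $\bm e_{1}$ legitimate. For $\infty\in\{\alpha_{1},\dots,\alpha_{n}\}$ I would finish as in Proposition~\ref{pro33}, since $\mathcal{C}$ is then an extended code of the non-GRS MDS code just constructed.
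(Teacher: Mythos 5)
Your MDS argument and your treatment of the generic case $\infty\notin\{\alpha_1,\dots,\alpha_n\}$ are correct and essentially the paper's own: the pivotal identity $\bm g_1*\bm g_3-\bm g_2*\bm g_2=\beta v_1\alpha_1^2\,\bm e_1$ is exactly the ``last row'' in the paper's $2k\times n$ matrix, and your subsequent clean-up (extract $\bm e_1$, recover the pure vectors $\bm w_0,\dots,\bm w_{2k-2}$, and certify independence by counting roots of a polynomial of degree at most $2k-2$ at the $n-1$ points $\alpha_2,\dots,\alpha_n$) is a more carefully justified version of the paper's one-line claim that the first $2k$ columns of its matrix have non-zero determinant. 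Up to that point the proposal is sound.

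The genuine gap is the case $\alpha_1=\infty$, which the hypothesis permits (only $\alpha_1\neq 0$ is excluded) and which the paper treats as a separate third case. Your fallback for $\infty\in\{\alpha_1,\dots,\alpha_n\}$, namely that $\mathcal{C}$ ``is an extended code of the non-GRS MDS code just constructed,'' fails precisely when $\alpha_1=\infty$: the $\infty$ coordinate is then the same coordinate carrying the perturbation, so puncturing it deletes $\beta$ entirely and leaves a plain GRS code, from which no contradiction can be drawn. Your row formulas also no longer apply there, since with $\alpha_1=\infty$ one has $\bm g_1=(\beta,v_2,\dots,v_n)$, $\bm g_i=(0,v_2\alpha_2^{i-1},\dots,v_n\alpha_n^{i-1})$ for $2\le i\le k-1$, and $\bm g_k=(v_1,v_2\alpha_2^{k-1},\dots,v_n\alpha_n^{k-1})$, so in particular $\bm g_1*\bm g_3-\bm g_2*\bm g_2=\bm 0$ and your extraction of $\bm e_1$ collapses. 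The paper repairs this with a different extraction row, $\bm g_1*\bm g_k-\bm g_2*\bm g_{k-1}=(\beta v_1,0,\dots,0)$, whose leading coefficient $\beta v_1$ is non-zero without any condition on $\alpha_1$; once $\bm e_1$ is obtained this way, the remaining products $\bm g_i*\bm g_j$ supply $2k-1$ vectors supported on coordinates $2,\dots,n$ of Vandermonde type, and your independence argument goes through verbatim. You should add this case to make the proof complete.
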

\begin{proof}
The MDS property is proven in the same way as in the previous proposition. We are left to prove that this code is not a GRS code.

We first suppose $\infty\notin\{\alpha_{1},\dots,\alpha_{n}\}$, let
$$
\left\{
\begin{array}{ll}
{\bf g}_1=(v_{1}+\beta,v_{2},\dots,v_{n}) & i=1\\
{\bf g}_i=(v_{1}\alpha_{1}^{i-1},v_{2}\alpha_{2}^{i-1},\dots,v_{n}\alpha_{n}^{i-1}) & 2\leq i\leq k\\
\end{array}
\right.
$$
be the $i$th row of $G_{\bm\alpha,\bm v} + \beta E_{11}$.

Then $\mathcal{C}^2$ contains $2k$ vectors, written as the rows of the following matrix:
$$
\left(
\begin{matrix}
(v_{1}+\beta)^{2}&v_{2}^{2}&\dots&v_{n}^{2}\\
v_{1}^{2}\alpha_{1}+\beta v_{1}\alpha_{1}&v_{2}^{2}\alpha_{2}&\dots &v_{n}^{2}\alpha_{n}\\
v_{1}^{2}\alpha_{1}^{2}&v_{2}^{2}\alpha_{2}^{2}&\dots&v_{n}^{2}\alpha_{n}^{2}\\
\vdots&\vdots&&\vdots\\
v_{1}^{2}\alpha_{1}^{2k-2}&v_{2}^{2}\alpha_{2}^{2k-2}&\dots&v_{n}^{2}\alpha_{n}^{2k-2}\\
\beta v_{1}\alpha_{1}^{2}&0&\dots&0
\end{matrix}
\right)
$$
where the first row is the Schur product ${\bf g}_1*{\bf g}_1$, the second row is ${\bf g}_{1}*{\bf g}_{2}$, the third to the $(2k-1)$th rows are ${\bf g}_{i}*{\bf g}_{j}$ for $i+j=2,3,\dots,2k-2$ and $2\leq i,j\leq k$, and the last row is ${\bf g}_{1}*{\bf g}_{3}-{\bf g}_{2}*{\bf g}_{2}$.

Using the assumptions on $\alpha_1$, we claim that the above $2k$ vectors are linearly independent since the determinant of the first $2k$ columns of the above matrix is non-zero. Thus $\dim(\mathcal{C}^{2})\geq 2k$ which implies $\mathcal{C}$ is non-GRS via Proposition \ref{pro-zy}.

For the case $\infty\in\{\alpha_{1},\dots,\alpha_{n}\}$ and $\alpha_{1}\neq\infty$, $\mathcal{C}$ is a non-GRS code since it is an extended code of the above non-GRS MDS code.

For the case $\alpha_{1}=\infty$,
$\mathcal{C}^2$ contains $2k$ vectors, written as the rows of the following matrix:
$$
\left(
\begin{matrix}
\beta^{2}&v_{2}&\dots&v_{n}\\
0&v_{2}^{2}\alpha_{2}&\dots &v_{n}^{2}\alpha_{n}\\
0&v_{2}^{2}\alpha_{2}^{2}&\dots&v_{n}^{2}\alpha_{n}^{2}\\
\vdots&\vdots&&\vdots\\
0&v_{2}^{2}\alpha_{2}^{2k-3}&\dots&v_{n}^{2}\alpha_{n}^{2k-3}\\
v_{1}^{2}&v_{2}^{2}\alpha_{2}^{2k-2}&\dots&v_{n}^{2}\alpha_{n}^{2k-2}\\
\beta v_{1}&0&\dots&0
\end{matrix}
\right)
$$
where the first row is computed from ${\bf g}_{1}*{\bf g}_{1}$, the second is ${\bf g}_{1}*{\bf g}_{2}$, the third to the $(2k-1)$th row are ${\bf g}_{i}*{\bf g}_{j}$ for $i+j=2,3,\dots,2k-2$ and $2\leq i,j\leq k$, and the last row is ${\bf g}_{1}*{\bf g}_{k}-{\bf g}_{2}*{\bf g}_{k-1}$.

We claim that the above $2k$ vectors are linearly independent since the determinant of the first $2k$ columns of the above matrix is non-zero, which completes the proof.
\end{proof}
\begin{Remark}
In Propositions \ref{pro33} and \ref{pro34}, we construct $[n,k]$ non-GRS MDS codes over $\mathbb{F}_{q^{b}}$ with $n\leq q+1$ and $3\leq k\leq\frac{n-1}{2}$. Furthermore, in those two propositions, we added the extra entries in the first column of $G_{\bm\alpha,\bm v}$. Actually, the extra entries can be added in any one column  as long as we give the constraint on the corresponding coordinate of $\bm\alpha$.
\end{Remark}

\begin{Example}
Let $q=7^{2}$, $\mathbb{F}_{7^{2}}=\mathbb{F}_{7}(\theta)$ with $\theta^{2}+2=0$. Let $\bm\alpha=(1,\dots,6,0,\infty)$ and $\bm v=\bm 1$, then
$$
\left(
\begin{matrix}
1+\theta&1&1&1&1&1&1&0\\
1&2&3&4&5&6&0&0\\
1&4&2&2&4&1&0&1
\end{matrix}
\right)
$$
and
$$
\left(
\begin{matrix}
1&1&1&1&1&1&1&\theta\\
1&2&3&4&5&6&0&0\\
1&4&2&2&4&1&0&1
\end{matrix}
\right)
$$
generate each a $[8,3]$ non-GRS MDS code over $\mathbb{F}_{49}$.
\end{Example}

\section{GRS and non-GRS MDS GTRS Codes}



Let $k \leq n$ and $\ell\geq 1$ be positive integers. Set the vectors $\bm h=(h_{1},h_{2},\ldots,h_{\ell})$ with $s$ distinct coefficients and all coefficients in $\{0,1,\ldots,k-1\}$, $\bm t=(t_{1},t_{2},\ldots,t_{\ell})$ with coefficients in $\{1,\ldots,n-k\}$ and $\bm \eta=(\eta_{1},\eta_{2},\ldots,\eta_{\ell})\subseteq\mathbb{F}_{q}^l$. Furthermore, the tuples $(h_i,t_i)$ are distinct, so ${\bm t}$, ${\bm h}$ may have repeated entries, just not in the same coordinates.
The set of $[\bm t,\bm h,\bm \eta]$-{\it twisted polynomials} in $x$ is by definition
$$
\mathcal{P}_{k,n,\ell}[\bm t,\bm h,\bm\eta]
=\Big{\{}\sum_{i=0}^{k-1}f_{i}x^{i}+\sum_{j=1}^{\ell}\eta_{j}f_{h_{j}}x^{k-1+t_{j}}
\,|\,
f_{i}\in\mathbb{F}_{q},i=0,\dots,k-1\Big{\}}\subseteq\mathbb{F}_{q}[x].
$$

We observe that a choice of $(h_j,t_j)$ corresponds to one coefficient $\eta_j$. Having $\eta_j=0$ is thus equivalently obtained by removing the corresponding choice of $(h_j,t_j)$ from ${\bm h}\times{\bm t}$, that is reducing $\ell$. As a consequence we may, without loss of generality, assume that ${\bm \eta}\in(\FF_q^*)^\ell$.

\begin{Definition}\cite{P. Beelen1}\label{def-2}
Let $\alpha_{1},\dots,\alpha_{n}\in\mathbb{F}_{q}$ be distinct elements and $v_{1},\dots,v_{n}\in\mathbb{F}_{q}^{*}$. Set $\bm v= (v_1,\ldots,v_n)$, $\bm \alpha= (\alpha_1,\ldots,\alpha_n)$ and consider $\mathbf{t},\mathbf{h},\bm\eta,\ell$ and $\mathcal{P}_{n,k,\ell}[\mathbf{t},\mathbf{h},\bm\eta]$ as defined above. The generalized twisted Reed-Solomon (GTRS) code of length $n$
is defined by
$$
GTRS_{n,k,\ell}[\bm\alpha,\bm t,\bm h,\bm\eta,\bm v]=\{
(v_1f(\alpha_{1}),v_2f(\alpha_{2}),...,v_nf(\alpha_{n}))
\,|\, f\in\mathcal{P}_{k,n,\ell}[\mathbf{t},\mathbf{h},\bm\eta]\}.
$$
\end{Definition}

Twisted Reed-Solomon codes were first introduced in \cite{P. Beelen,P. Beelen1}. In \cite{P. Beelen}, the vector $\bm h$ has $\ell$ distinct entries. In \cite{P. Beelen1}, this assumption was replaced by asking the tuples $(h_i, t_i)$ for $i = 1, \ldots,\ell$ to be distinct. Some authors (see \cite{WEIDONG} for instance) preferred the name twisted generalized Reed-Solomon (TGRS) codes, which emphasizes the relation to generalized Reed-Solomon codes. Indeed, the GTRS codes are GRS codes if $\bm\eta=(0,\dots,0)$. Both the names twisted generalized Reed-Solomon code (TGRS), and generalized twisted Reed-Solomon code (GTRS) are used interchangeably.
Generalized twisted Reed-Solomon codes are considered in a variety of contexts, such as self-dual codes, LCD codes, MDS and near-MDS codes, to name a few (see e.g. \cite{WEIDONG}-\cite{HYNL}, \cite{H.LIU}, \cite{LUO}, \cite{HS}-\cite{C.ZHU2}).
As mentioned in the introduction, the context of interest for this paper is that of (non-)GRS codes. Table \ref{tab:gtrs} summarizes the results that are thus most relevant, namely the design of non-GRS MDS codes, finding necessary and or sufficient for GTRS codes to be  MDS codes, and the construction of non-GRS MDS codes from the constructed MDS GTRS codes.

A polynomial in $\mathcal{P}_{k,n.\ell}[\mathbf{t},\mathbf{h},\bm\eta]$ is of the form
$f(x) = \sum_{i=0}^{k-1}f_{i}x^{i}+\sum_{j=1}^{\ell}\eta_{j}f_{h_{j}}x^{k-1+t_{j}} $ and we denote by
$f(\bm\alpha)=(f(\alpha_{1}),f(\alpha_{2}),\dots,f(\alpha_{n}))$ its evaluation in ${\bm \alpha}$.
A generator matrix for the $GTRS_{n,k,\ell}[\bm\alpha,\bm t,\bm h,\bm\eta,\bm v]$ code is thus obtained by finding $k$ such polynomials $f^{(0)},\ldots,f^{(k-1)}$ whose evaluation vectors are linearly independent:
$$\left(
\begin{matrix}
f^{(0)}(\bm\alpha)\\
f^{(1)}(\bm\alpha)\\
\vdots\\
f^{(k-1)}(\bm\alpha)\\
\end{matrix}
\right)D_{\bm v}$$
where $D_{\bm v}$ is a diagonal matrix with the coefficients of ${\bf v}$ on its diagonal. Choose $f^{(l)}(x)$ such that $f_{l}=1$ and $f_{m}=0$ for $m\neq l$, that is
\[
f^{(l)}(x) = x^l +
\sum_{j,h_j=l}\eta_jx^{k-1+t_j}
\]
so $f^{(l)}(x) = x^l$ if there is no $j$ such that $h_j=l$.

Set $\bm\alpha^{i}=(\alpha_{1}^{i},\alpha_{2}^{i},\dots,\alpha_{n}^{i})$, we can write
$$
\left(
\begin{matrix}
f^{(0)}(\bm\alpha)\\
f^{(1)}(\bm\alpha)\\
\vdots\\
f^{(k-1)}(\bm\alpha)\\
\end{matrix}
\right)=
\left(
\begin{matrix}
\bm 1\\
\bm\alpha\\
\vdots\\
\bm\alpha^{k-1}
\end{matrix}
\right)+
\left(
\begin{matrix}
\bm e_{0}\\
\bm e_{1}\\
\vdots\\
\bm e_{k-1}\\
\end{matrix}
\right),
$$
where
${\bm e}_l = {\bf 0} $ when
$l \not\in \{h_{1},h_{2},\dots,h_{\ell}\}$, and
$
{\bm e}_l = \sum_{j,h_j=l}\eta_j{\bm\alpha}^{k-1+t_j}
$ else.


Since $1\leq t_j\leq n-k$,  $k\leq k-1+t_{j}\leq n-1$ then we know there is a $k\times (n-k)$ matrix $M$ such that
$$
\left(
\begin{matrix}
\bm e_{0}\\
\bm e_{1}\\
\vdots\\
\bm e_{k-1}\\
\end{matrix}
\right)=
M\left(
\begin{matrix}
\bm\alpha^{k}\\
\bm\alpha^{k+1}\\
\vdots\\
\bm\alpha^{n-1}
\end{matrix}
\right).
$$
By construction of ${\bm e}_l$, the matrix $M$ has one non-zero row per distinct $h_j$, for a total of $s$ non-zero row(s). For a given $h_j$, there can be at most $n-k$ values for $t_j$, corresponding to the $n-k$ columns of $M$. If we were allowing $(h_1,t_1)=(h_2,t_2)$, the term $\eta_1+\eta_2$ would be found in $M$. But under the assumption that $(h_i,t_i)$ are distinct for $i=1,\ldots,\ell$, at most one term $\eta_j$ is found for each entry of $M$. Recalling that ${\bm \eta}\in(\FF_q^*)^\ell$, we get a total of at most $k(n-k)$ coefficients $\eta_j$, $j=1,\ldots,\ell$, implying that $\ell \leq k(n-k)$.

Then a generator matrix of an $[n,k]$ GTRS code is given by
\begin{equation}\label{eq:gtrs}
(I_{k}|M)\left(
\begin{matrix}
\bm 1\\
\bm\alpha\\
\vdots\\
\bm\alpha^{n-1}
\end{matrix}
\right)D_{\bm v}.
\end{equation}

Conversely, for any matrix $M=(m_{ij})$ of size $k\times(n-k)$, the matrix
\begin{equation*}
(I_{k}|M)\left(
\begin{matrix}
\bm 1\\
\bm\alpha\\
\vdots\\
\bm\alpha^{n-1}
\end{matrix}
\right)D_{\bm v}
\end{equation*}
generates a GTRS code $GTRS_{n,k,\ell}[\bm\alpha,\bm t,\bm h,\bm\eta,\bm v]$. In this case, $\ell$ is just the number of non-zero entries in $M$, $s$ is the number of non-zero rows of $M$ and each $(h_{i},t_{i})$ decides the position of the  corresponding non-zero entry in $M$.

\begin{Example}\label{ex:gm}
Take $k=2\leq n=4$ and $\ell=3$ (while we assume $k\geq 3$ in the section, we still use $k=2$ for this example for the sake of simplicity). Set ${\bm h}=(0,0,1)$  with $s=2$ distinct coefficients in $\{0,k-1=1\}$, ${\bm t}=(1,2,2)\in \{1,n-k=2\}$ and ${\bm \eta}=(\eta_1,\eta_2,\eta_3)\in \FF_q^3$.  A generic polynomial for these parameters is thus
\begin{eqnarray*}
f(x)
& = & \sum_{i=0}^{k-1}f_{i}x^{i}+\sum_{j=1}^{\ell}\eta_{j}f_{h_{j}}x^{k-1+t_{j}} \\
& = & f_0 + f_1x+
\eta_1f_0x^2 +
\eta_2f_0x^3 +
\eta_3f_1x^3,
\end{eqnarray*}
$f^{(0)}(x)=1+\eta_1x^2 +
\eta_2x^3$, $f^{(1)}(x)=x+\eta_3x^3$
and
\begin{eqnarray*}
\left(
\begin{matrix}
f^{(0)}(\bm\alpha)\\
f^{(1)}(\bm\alpha)
\end{matrix}
\right)
&=&
\begin{pmatrix}
\bm 1\\
\bm\alpha\\
\end{pmatrix}
+
\begin{pmatrix}
\eta_1\alpha_1^2+ \eta_2\alpha_1^3& \eta_1\alpha_2^2+ \eta_2\alpha_2^3& \eta_1\alpha_3^2+ \eta_2\alpha_3^3& \eta_1\alpha_4^2+  \eta_2\alpha_4^3\\
\eta_3\alpha_1^3 & \eta_3\alpha_2^3 & \eta_3\alpha_3^3 & \eta_3\alpha_4^3
\end{pmatrix}
\\
&=&\left(
\begin{matrix}
\bm 1\\
\bm\alpha\\
\end{matrix}
\right)+
\left(
\begin{matrix}
\eta_1 & \eta_2 \\
\eta_3 & 0 \\
\end{matrix}
\right)
\begin{pmatrix}
\bm \alpha^2\\
\bm\alpha^3\\
\end{pmatrix}.
\end{eqnarray*}
\end{Example}

\begin{table}
\begin{tabular}{l|p{6cm}|p{4cm}}
{\bf Question} & {\bf Parameters} & {\bf References} \\
\hline
MDS condition  & $\ell=s=1$, two special $\bm t, \bm h$ &\cite{P. Beelen1} \\
               & $\ell=s=1$, some different $\bm t, \bm h$ & \cite{J.SUI} \\
               & $s=1, \ell=2$  & \cite{C.ZHU2} \\
               & $\ell=s=2$ & \cite{HS}, \cite{YQ} \\
               & $s=2, \ell=4$ &  \cite{sui1} \\
               & $1\leq\ell=s$ & \cite{GU} \\
               & $1\leq s\leq\ell$ &  \cite{DING} \\
               \hline
MDS codes &  $\bm\alpha \subseteq G < \mathbb{F}^{*}_{q}$ and $\eta\in\mathbb{F}^{*}_{q}\setminus G$  & \cite{P. Beelen1}  \\
              &   $1\leq s=\ell$, $\bm\alpha\subset\FF_Q$,  $\mathbb{F}_{q}/\FF_Q$. & \cite{P. Beelen1}\\
              & $s=1, \ell=2$  & \cite{C.ZHU2} \\
              & (*) & \cite{HS}, \cite{YQ}, \cite{sui1}, \cite{GU}  \cite{DING} \\
              \hline
non-GRS MDS codes & (*) & \cite{P. Beelen1}  \\
                 &  (*)  & \cite{J.SUI} \\
                 & $\ell=s=1$ & \cite{P. Beelen1},\cite{C.ZHU1},\cite{C.ZHU} \\
                 &  (*) & \cite{C.ZHU2} \\
\end{tabular}
\caption{\label{tab:gtrs}
Summary of recent results on GTRS codes: (*) means that the parameters are inherited, in the corresponding reference.
}
\end{table}

\begin{Proposition}\label{trs}
Let $\mathcal{C}$ be a linear $[n,k]$ code over $\mathbb{F}_{q}$ with $1\leq k\leq n\leq q$, with generator matrix $G$. If the first $k$ columns of the matrix
$G(V_{\bm\alpha}D_{\bm v})^{-1}$
consist of an invertible matrix where
$$
V_{\bm\alpha}=\left(
\begin{matrix}
1&1&\dots&1\\
\alpha_{1}&\alpha_{2}&\dots&\alpha_{n}\\
\alpha_{1}^{2}&\alpha_{2}^{2}&\dots&\alpha_{n}^{2}\\
\vdots&\vdots&&\vdots\\
\alpha_{1}^{n-1}&\alpha_{2}^{n-1}&\dots&\alpha_{n}^{n-1}
\end{matrix}
\right)
$$
for $\alpha_{1},\alpha_{2},\dots,\alpha_{n}$ $n$ distinct elements in $\mathbb{F}_{q}$,
and $D_{\bm v}$ is a diagonal matrix with the coefficients of ${\bf v}$ on its diagonal,
then $\mathcal{C}$ is a GTRS code.
\end{Proposition}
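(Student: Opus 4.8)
The plan is to reduce the given generator matrix, by invertible row operations, to the canonical form $(I_k\mid M)V_{\bm\alpha}D_{\bm v}$ identified in the paragraph preceding the proposition, and then to invoke the converse observation established there: for any $k\times(n-k)$ matrix $M$, the matrix $(I_k\mid M)V_{\bm\alpha}D_{\bm v}$ generates a GTRS code. So the whole argument is structural rather than computational.

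First I would record that $V_{\bm\alpha}D_{\bm v}$ is an invertible $n\times n$ matrix: $V_{\bm\alpha}$ is a square Vandermonde matrix on the $n$ distinct elements $\alpha_1,\ldots,\alpha_n\in\mathbb{F}_q$, hence nonsingular, and $D_{\bm v}$ is invertible since every entry of $\bm v$ is nonzero. Thus $(V_{\bm\alpha}D_{\bm v})^{-1}$ exists and $N:=G(V_{\bm\alpha}D_{\bm v})^{-1}$ is a well-defined $k\times n$ matrix. I would write $N=(N_1\mid N_2)$, where $N_1$ is the $k\times k$ block formed by the first $k$ columns and $N_2$ is $k\times(n-k)$; by hypothesis $N_1$ is invertible. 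Undoing the change of basis gives $G=NV_{\bm\alpha}D_{\bm v}=(N_1\mid N_2)V_{\bm\alpha}D_{\bm v}$, and multiplying on the left by the invertible matrix $N_1^{-1}$ yields another generator matrix of the same code $\mathcal{C}$, since left multiplication by an invertible matrix is an invertible row operation that preserves the row space, namely
$$
N_1^{-1}G=(I_k\mid N_1^{-1}N_2)\,V_{\bm\alpha}D_{\bm v}.
$$
Setting $M:=N_1^{-1}N_2$, a $k\times(n-k)$ matrix, the code $\mathcal{C}$ is generated by $(I_k\mid M)V_{\bm\alpha}D_{\bm v}$. Because $V_{\bm\alpha}$ here is exactly $\begin{pmatrix}\bm 1\\\bm\alpha\\\vdots\\\bm\alpha^{n-1}\end{pmatrix}$, this is precisely the form treated in the converse discussion above, which asserts that $(I_k\mid M)V_{\bm\alpha}D_{\bm v}$ generates a GTRS code $GTRS_{n,k,\ell}[\bm\alpha,\bm t,\bm h,\bm\eta,\bm v]$, with $\ell$, $s$, and the tuples $(h_i,t_i)$ read off from the nonzero entries of $M$. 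Hence $\mathcal{C}$ is a GTRS code.

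I do not expect a genuine obstacle here; the content is entirely the recognition that the invertibility of the first $k$ columns of $N$ is exactly what permits the reduction to the block form $(I_k\mid M)$. The only points that must be handled cleanly are the two routine verifications: that $V_{\bm\alpha}D_{\bm v}$ is invertible, so that $N$ and the whole construction make sense, and that left multiplication by $N_1^{-1}$ returns a generator matrix of the \emph{same} code rather than a different one. Once these are in place, the earlier characterization of GTRS generator matrices delivers the conclusion immediately.
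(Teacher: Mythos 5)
Your proposal is correct and follows exactly the paper's own argument: write $G(V_{\bm\alpha}D_{\bm v})^{-1}=(A\mid B)$ with $A$ the invertible $k\times k$ block, pass to the generator matrix $A^{-1}G=(I_k\mid A^{-1}B)V_{\bm\alpha}D_{\bm v}$, and invoke the converse observation that any matrix of this form generates a GTRS code. Your additional verifications (invertibility of $V_{\bm\alpha}D_{\bm v}$ and preservation of the row space under left multiplication by an invertible matrix) are routine details the paper leaves implicit.
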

\begin{proof}
Set $(A_{k\times k}|B_{k\times(n-k)})=G(V_{\bm\alpha}D_{\bm v})^{-1}$. Since
\begin{eqnarray*}
G &= & G(V_{\bm\alpha}D_{\bm v})^{-1}V_{\bm\alpha}D_{\bm v}
=(A|B)V_{\bm\alpha}D_{\bm v},
\end{eqnarray*}
then $A^{-1}G=(I_{k}|A^{-1}B)V_{\bm\alpha}D_{\bm v}$ is also a generator matrix of $\mathcal{C}$. From (\ref{eq:gtrs}), we know that $A^{-1}G$ is a generator matrix of some GTRS code.
\end{proof}

In the following example, we use Proposition \ref{trs} and its proof to show that non-trivial (i.e. $\bm\eta\neq\bm0$) GTRS codes also contain GRS codes which implies not all MDS GTRS codes are non-GRS.

\begin{Example}
Set $q=7$, $n=7$, $\bm\alpha'=(1,\dots,6,0)$ and $\bm v=\bm 1$ so $D_{\bm v}=I_7$, then a generator matrix $G$ of the GRS code $GRS_{7,3}(\bm\alpha',\bm 1)$ is
$$
G =
G_{\bm\alpha'}=\left(
\begin{matrix}
1&1&1&1&1&1&1\\
1&2&3&4&5&6&0\\
1&4&2&2&4&1&0
\end{matrix}
\right).
$$
Set $\bm\alpha=(0,2,\dots,6,1)$ then
$$
G(V_{\bm\alpha}D_{\bm v})^{-1}=
GV_{\bm\alpha}^{-1}=\left(
\begin{matrix}
1&0&0&0&0&0&0\\
1&2&1&1&1&1&0\\
1&1&2&1&1&1&0
\end{matrix}
\right)
=(A|B),~A=\left(
\begin{matrix}
1&0&0\\
1&2&1\\
1&1&2
\end{matrix}
\right)
$$
and
$$
A^{-1}=\left(
\begin{matrix}
1&0&0\\
2&3&2\\
2&2&3
\end{matrix}
\right).
$$
We finally have that
$$
A^{-1}G
=
\begin{pmatrix}
1& 1& 1& 1 &1& 1 &1 \\
0 &2 &1& 4& 4& 1& 2\\
0 &4& 0 &2 &3 &3& 2 \\
\end{pmatrix}
=
\begin{pmatrix}
1&0&0&0&0&0&0\\
0&1&0&5&5&5&0\\
0&0&1&5&5&5&0
\end{pmatrix}V_{\alpha}
$$
is both a generator matrix for the GRS code $GRS_{7,3}(\bm\alpha',\bm 1)$
and for the GTRS code
$GTRS_{7,3}[\bm\alpha,\bm t,\bm h,\bm\eta,\bm v]$ with $\bm v=\bm 1$, $\bm h=(1,1,1,2,2,2)$, $\bm t=(1,2,3,1,2,3)$ and $\bm\eta=(5,5,5,5,5,5)$. The matrix $M$ from (\ref{eq:gtrs}) has the whole zero vector as its first row, since $0\not\in{\bm h}$, and has for 2nd and 3rd row vectors with three non-zero coefficients. The columns of these non-zero coefficients are given by the vector ${\bf t}$, and their values by the vector ${\bm \eta}$.
\end{Example}

A typical generator matrix $G$ for a linear code is in systematic form, that is
$G = (I_k | A_{k\times n-k})$. To compute $G(V_{\bm\alpha}D_{\bm v})^{-1}$,
write the $n\times n$ matrix $(V_{\bm\alpha}D_{\bm v})^{-1}$ as a block matrix:
\[
(V_{\bm\alpha}D_{\bm v})^{-1}
=\begin{pmatrix}
B_{k\times k} & B_{k\times n-k}\\
B_{n-k\times k} & B_{b-k\times n-k}
\end{pmatrix}
\]
so
\begin{eqnarray*}
G(V_{\bm\alpha}D_{\bm v})^{-1}
&=&
(I_k | A_{k\times n-k})
\begin{pmatrix}
B_{k\times k} & B_{k\times n-k}\\
B_{n-k\times k} & B_{b-k\times n-k}
\end{pmatrix}\\
&=&
\begin{pmatrix}
B_{k\times k}+A_{k\times n-k}B_{n-k\times k} |
B_{k\times n-k}+A_{k\times n-k}B_{n-k\times n-k}
\end{pmatrix}.
\end{eqnarray*}
This gives an explicit condition on the $A_{k\times k}$ part of $G$ for the corresponding code to be GTRS.

We provide next an example where the condition of the proposition is not satisfied, and examples where the condition is instead satisfied.

\begin{Example}
Set ${\bm \alpha}$, ${\bm v}={\bf 1}$ and
\[
G =
\begin{pmatrix}
{\bf 1} \\
{\bm \alpha} \\
\vdots \\
{\bm \alpha}^{k-2} \\
{\bm \alpha}^{k} \\
\end{pmatrix}.
\]
Then
$
G(V_{\bm\alpha}D_{\bm v})^{-1}=GV_{\bm\alpha}^{-1}=(A|B)
$
where
\[
A =
\begin{pmatrix}
1&&& \\
&\ddots&& \\
&&1& \\
&&&0 \\
\end{pmatrix},~
B =
\begin{pmatrix}
\bm 0 \\
\vdots \\
\bm 0 \\
\bm b \\
\end{pmatrix}
\]
with $\bm b=(1,0,\dots,0)$, which shows that the first $k$  columns of $G(V_{\bm \alpha}D_{\bm v})^{-1}$ are not invertible.
\end{Example}

The following corollary gives a classification of part of our non-GRS MDS codes. Note that it is not clear that whether all constructed non-GRS MDS codes in Section 2 are GTRS codes or not.
\begin{Corollary}
The codes in Proposition \ref{pro34} with $\infty\notin\{\alpha_{1},\alpha_{2},\dots,\alpha_{n}\}$ are GTRS codes with the same $\bm\alpha$, $\bm v$.
\end{Corollary}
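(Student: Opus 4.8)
The plan is to invoke the GTRS criterion of Proposition \ref{trs}, applied over the field $\mathbb{F}_{q^{b}}$ rather than $\mathbb{F}_{q}$. This is legitimate because the evaluation points $\alpha_{1},\dots,\alpha_{n}$ lie in $\mathbb{F}_{q}\subseteq\mathbb{F}_{q^{b}}$, are distinct, and satisfy $n\leq q\leq q^{b}$, so $V_{\bm\alpha}$ is invertible over $\mathbb{F}_{q^{b}}$ and the proof of Proposition \ref{trs} goes through verbatim; the resulting GTRS code uses the same $\bm\alpha$ and $\bm v$ by construction. Thus it suffices to check that the first $k$ columns of $G(V_{\bm\alpha}D_{\bm v})^{-1}$ form an invertible $k\times k$ matrix, where $G=G_{\bm\alpha,\bm v}+\beta E_{11}$ is the generator matrix of Proposition \ref{pro34}.

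First I would observe that, in the case $\infty\notin\{\alpha_{1},\dots,\alpha_{n}\}$, the matrix $G_{\bm\alpha,\bm v}$ is exactly the first $k$ rows of $V_{\bm\alpha}D_{\bm v}$, so that $G_{\bm\alpha,\bm v}(V_{\bm\alpha}D_{\bm v})^{-1}=(I_{k}\mid 0)$. Writing $W=(V_{\bm\alpha}D_{\bm v})^{-1}$, linearity gives
\[
G(V_{\bm\alpha}D_{\bm v})^{-1}=(I_{k}\mid 0)+\beta E_{11}W .
\]
Since $E_{11}W$ has first row equal to the first row of $W$ and all other rows zero, the $k\times k$ matrix $A$ formed by the first $k$ columns is the identity with its first row perturbed: its $(1,1)$ entry is $1+\beta W_{11}$, its first-row off-diagonal entries are $\beta W_{12},\dots,\beta W_{1k}$, and rows $2,\dots,k$ are standard basis vectors. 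Expanding the determinant along rows $2,\dots,k$ then yields $\det A=1+\beta W_{11}$.

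The key step is to argue that $\det A\neq 0$. Because $V_{\bm\alpha}$ and $D_{\bm v}$ both have entries in $\mathbb{F}_{q}$, so does $W$, and in particular $W_{11}=\big((V_{\bm\alpha}D_{\bm v})^{-1}\big)_{11}\in\mathbb{F}_{q}$. If $W_{11}=0$ then $\det A=1\neq 0$; if $W_{11}\neq 0$ then $1+\beta W_{11}=0$ would force $\beta=-W_{11}^{-1}\in\mathbb{F}_{q}$, contradicting the hypothesis $\beta\notin\mathbb{F}_{q}$. Hence $A$ is invertible in either case, and Proposition \ref{trs} shows that $\mathcal{C}$ is a GTRS code with the same $\bm\alpha$ and $\bm v$.

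I expect the only real subtlety — rather than a genuine obstacle — to be the justification that Proposition \ref{trs} may be used over the extension field $\mathbb{F}_{q^{b}}$; everything else reduces to the single observation that the relevant entry $W_{11}$ of the inverse lies in the small field $\mathbb{F}_{q}$ while $\beta$ does not, which is precisely what forces $\det A\neq 0$. One could optionally make $W_{11}$ explicit via the Lagrange-interpolation description of $V_{\bm\alpha}^{-1}$, giving $W_{11}=v_{1}^{-1}\prod_{l\geq 2}\frac{-\alpha_{l}}{\alpha_{1}-\alpha_{l}}$, but this formula is not needed for the argument.
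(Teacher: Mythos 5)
Your proposal is correct and follows essentially the same route as the paper: both invoke Proposition \ref{trs}, compute $(G_{\bm\alpha,\bm v}+\beta E_{11})(V_{\bm\alpha}D_{\bm v})^{-1}=(I_k\mid \bm 0)$ plus a perturbation supported on the first row, and conclude that the leading $k\times k$ block is invertible because its determinant has the form $1+\beta c$ with $c\in\mathbb{F}_q$, which cannot vanish since $\beta\notin\mathbb{F}_q$. The only differences are cosmetic: the paper makes $c=v_1^{-1}m_0$ explicit via the Vandermonde inverse and splits cases according to whether some $\alpha_i=0$, whereas you argue abstractly from $W_{11}\in\mathbb{F}_q$ (and you are slightly more careful in justifying the use of Proposition \ref{trs} over the extension field $\mathbb{F}_{q^b}$, a point the paper passes over silently).
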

\begin{proof}
We compute
$$(G_{\bm\alpha,\bm v} + \beta E_{11})(V_{\bm\alpha}D_{\bm v})^{-1}=(I_{k}|\bm 0)+\begin{pmatrix}
v_{1}^{-1}\beta\bm m \\
\bm 0 \\
\vdots \\
\bm 0 \\
\end{pmatrix}$$
where $\bm m=(m_{0},m_{1},\dots,m_{n-1})$ is the first row of $V_{\bm\alpha}^{-1}$.

Since the inverse of a Vandermonde matrix is known, we have $$a\prod_{i=2}^{n}(x-\alpha_{i})=\sum_{j=0}^{n-1}m_{j}x^{j}$$ for some non-zero $a\in\mathbb{F}_{q}$.

If $\alpha_{i}=0$ for some $2\leq i$, then $m_{0}=0$ which implies the first $k$ columns of $$(I_{k}|\bm 0)+\begin{pmatrix}
v_{1}^{-1}\beta\bm m \\
\bm 0 \\
\vdots \\
\bm 0 \\
\end{pmatrix}$$
is an upper triangular matrix with non-zero diagonal entries.

If $\alpha_{i}\neq0$ for all $2\leq i$, then $m_{0}\neq0$ and $v_{1}^{-1}m_{0}\beta+1\neq0$ since the minimal polynomial of $\beta$ is of at least degree $2$ which also implies the first $k$ columns of $$(I_{k}|\bm 0)+\begin{pmatrix}
v_{1}^{-1}\beta\bm m \\
\bm 0 \\
\vdots \\
\bm 0 \\
\end{pmatrix}$$
is an upper triangular matrix with non-zero diagonal entries. By Proposition \ref{trs}, we complete the proof.
\end{proof}

\begin{Example}
Let $q=7^{2}$, $\mathbb{F}_{7^{2}}=\mathbb{F}_{7}(\theta)$ with $\theta^{2}+2=0$. Let $\bm\alpha=(1,\dots,6,0)$ and $\bm v=\bm 1$, then
$$
\left(
\begin{matrix}
1+\theta&1&1&1&1&1&1\\
1&2&3&4&5&6&0\\
1&4&2&2&4&1&0
\end{matrix}
\right)
$$
generate a $[7,3]$ non-GRS MDS code $C$ over $\mathbb{F}_{49}$ which is constructed from Proposition  \ref{pro34}.

Then $$(G_{\bm\alpha,\bm v} + \theta E_{11})V_{\bm\alpha}^{-1}=(I_{3}|\bm 0)+\begin{pmatrix}
v_{1}^{-1}\beta\bm m \\
\bm 0 \\
\vdots \\
\bm 0 \\
\end{pmatrix}=(I_{3}|\bm 0)+\begin{pmatrix}
\theta(0,6,6,6,6,6,6) \\
\bm 0 \\
\vdots \\
\bm 0 \\
\end{pmatrix}.$$
Set
$$
P=\begin{pmatrix}
1&\theta&\theta \\
0&1&0 \\
0&0&1 \\
\end{pmatrix},
$$
then
$$
P(G_{\bm\alpha,\bm v} + \theta E_{11})V_{\bm\alpha}^{-1}=\begin{pmatrix}
1&0&0&6\theta&6\theta&6\theta&6\theta \\
0&1&0&0&0&0&0 \\
0&0&1&0&0&0&0 \\
\end{pmatrix}.
$$
Then
$$
P(G_{\bm\alpha,\bm v} + \theta E_{11})V_{\bm\alpha}^{-1}V_{\bm\alpha}=\begin{pmatrix}
1&0&0&6\theta&6\theta&6\theta&6\theta \\
0&1&0&0&0&0&0 \\
0&0&1&0&0&0&0 \\
\end{pmatrix}V_{\bm\alpha}
$$
which implies the code $C$ is also a GTRS code
$GTRS_{7,3}[\bm\alpha,\bm t,\bm h,\bm\eta,\bm v]$ with $\bm v=\bm 1$, $\bm h=(0,0,0,0)$, $\bm t=(1,2,3,4)$ and $\bm\eta=(6\theta,6\theta,6\theta,6\theta)$.
\end{Example}


\begin{Lemma}\label{lem:grs}
Let $G$ be a generator matrix of a linear $[n,k]$ code $\mathcal{C}$. Then asking $G$ to be the generator matrix of a GTRS code is equivalent to ask for the existence of ${\bm \alpha}, {\bm v}$ such that the $k$ rows of
\[
G-
\begin{pmatrix}
\bm 1\\
\bm\alpha\\
\vdots \\
\bm \alpha^{k-1}
\end{pmatrix}D_{\bm v}
\]
are codewords of the GRS code
$GRS_{n,n-k}({\bm \alpha}, {\bm \alpha}^k*{\bm v})$
if all coefficients of $\bm\alpha$ are non-zero, and codewords of the code which extends $0$ in the $i$th coordinate to the GRS code $GRS_{n-1,n-k}({\bm \alpha}', ({\bm \alpha}')^k*{\bm v}')$ for ${\bm \alpha}'=(\alpha_1,\ldots,\widehat{\alpha_{i}},\ldots,\alpha_{n})$ and ${\bm v}'=(v_1,\ldots,\widehat{v_{i}},\ldots,v_{n})$ if $\alpha_{i}=0$ where ~
 $\widehat{}$~ means deleting the corresponding entry in the vector.
\end{Lemma}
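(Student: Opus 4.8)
The plan is to reduce the statement to the canonical generator-matrix description of GTRS codes recorded in (\ref{eq:gtrs}) and its converse, and then to recognize the resulting condition as a GRS-membership condition on the rows of $G$. By that description, for fixed evaluation points $\bm\alpha$ and multipliers $\bm v$, the matrix $G$ is a generator matrix of a GTRS code exactly when there is a $k\times(n-k)$ matrix $M$ with $G=(I_k\mid M)V_{\bm\alpha}D_{\bm v}$, where $V_{\bm\alpha}$ is here the full $n\times n$ Vandermonde matrix of $\bm\alpha$. The first step is purely block bookkeeping: splitting $V_{\bm\alpha}$ into its top $k$ rows $\bm 1,\bm\alpha,\dots,\bm\alpha^{k-1}$ and its bottom $n-k$ rows $\bm\alpha^{k},\dots,\bm\alpha^{n-1}$, the product reads
\[
G=\begin{pmatrix}\bm 1\\ \bm\alpha\\ \vdots\\ \bm\alpha^{k-1}\end{pmatrix}D_{\bm v}
+M\begin{pmatrix}\bm\alpha^{k}\\ \bm\alpha^{k+1}\\ \vdots\\ \bm\alpha^{n-1}\end{pmatrix}D_{\bm v},
\]
so the existence of such an $M$ is equivalent to every row of $G-\left(\begin{smallmatrix}\bm 1\\ \vdots\\ \bm\alpha^{k-1}\end{smallmatrix}\right)D_{\bm v}$ lying in the row space of $\left(\begin{smallmatrix}\bm\alpha^{k}\\ \vdots\\ \bm\alpha^{n-1}\end{smallmatrix}\right)D_{\bm v}$, the $i$th row of $M$ supplying the coefficients for the $i$th such row.

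The second step identifies that row space with a GRS code. Evaluating the monomials $x^{j}$, $0\le j\le n-k-1$, against the multipliers $\bm\alpha^{k}*\bm v$ produces exactly the rows $\bm\alpha^{k+j}*\bm v$, so by Definition \ref{def-1} the matrix $\left(\begin{smallmatrix}\bm\alpha^{k}\\ \vdots\\ \bm\alpha^{n-1}\end{smallmatrix}\right)D_{\bm v}$ is a generator matrix of $GRS_{n,n-k}(\bm\alpha,\bm\alpha^{k}*\bm v)$, provided every multiplier $\alpha_i^{k}v_i$ is non-zero, that is, every $\alpha_i\neq 0$. In that case the equivalence falls out at once: $G$ is a GTRS generator matrix for $(\bm\alpha,\bm v)$ if and only if the $k$ rows of the displayed difference are codewords of $GRS_{n,n-k}(\bm\alpha,\bm\alpha^{k}*\bm v)$, and quantifying existentially over $(\bm\alpha,\bm v)$ yields the first alternative of the statement. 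Each implication is obtained simply by reading the block equality forwards or backwards, so no separate converse argument is needed.

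The one delicate point, which I expect to be the main obstacle, is the case $\alpha_i=0$ for some index $i$ (there is at most one, since the $\alpha_i$ are distinct). Then the multiplier $\alpha_i^{k}v_i$ vanishes because $k\ge 1$, and $GRS_{n,n-k}(\bm\alpha,\bm\alpha^{k}*\bm v)$ is no longer well defined. I would handle this by noting that the entire $i$th column of the bottom Vandermonde block is zero, as every exponent there is at least $k\ge 1$; hence the $i$th coordinate of every row of $G-\left(\begin{smallmatrix}\bm 1\\ \vdots\\ \bm\alpha^{k-1}\end{smallmatrix}\right)D_{\bm v}$ is forced to be $0$. Deleting that coordinate, the surviving entries of each row are $M$-combinations of the punctured rows $(\bm\alpha')^{k+j}*\bm v'$, whose multipliers $\alpha_j^{k}v_j$ with $j\neq i$ are all non-zero; these generate $GRS_{n-1,n-k}(\bm\alpha',(\bm\alpha')^{k}*\bm v')$. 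Thus the rows are precisely the codewords of this shortened GRS code extended by $0$ in coordinate $i$, which is the second alternative in the statement. Beyond this zero-multiplier bookkeeping, the argument is just the block computation above and its reversal.
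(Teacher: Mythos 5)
Your proposal is correct and follows essentially the same route as the paper: both invoke the generator-matrix characterization (\ref{eq:gtrs}), split the Vandermonde matrix into its top $k$ and bottom $n-k$ rows, and factor the bottom block as $\bigl(\begin{smallmatrix}\bm 1\\ \vdots\\ \bm\alpha^{n-k-1}\end{smallmatrix}\bigr)D_{\bm\alpha^{k}}D_{\bm v}$ so that existence of $M$ becomes membership of the rows of the difference in $GRS_{n,n-k}(\bm\alpha,\bm\alpha^{k}*\bm v)$. If anything, your treatment of the $\alpha_i=0$ case (the forced zero column and the passage to the punctured code) is spelled out more carefully than the paper's, which leaves that case implicit in ``we get the result immediately.''
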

\begin{proof}
By (\ref{eq:gtrs}), we know
\[
G-
\begin{pmatrix}
\bm 1\\
\bm\alpha\\
\vdots \\
\bm \alpha^{k-1}
\end{pmatrix}D_{\bm v}=M\left(
\begin{matrix}
\bm\alpha^{k}\\
\bm\alpha^{k+1}\\
\vdots\\
\bm\alpha^{n-1}
\end{matrix}
\right)D_{\bm v}=M\begin{pmatrix}
\bm 1\\
\bm\alpha\\
\vdots \\
\bm\alpha^{n-k-1}
\end{pmatrix}
D_{{\bm \alpha}^k}
D_{\bm v}.
\]
Then we get the result immediately.

\end{proof}

\begin{Proposition}
Let $G$ be a generator matrix of a linear $[n,k]$ code $\mathcal{C}$. Then asking $G$ to be the generator matrix of a GTRS code is equivalent to ask for the existence of ${\bm \alpha}, {\bm v}$ such that $G$ satisfy
\[
\begin{pmatrix}
\bm 1\\
\bm\alpha\\
\vdots \\
\bm \alpha^{k-1}
\end{pmatrix}D_{\bm w}
G^T =
D'_{{\bm u}}
\]
where $D'_{{\bm u}}$ is an anti-diagonal matrix with non-zero coefficients ${\bm u}$ on the anti-diagonal. And ${\bm w}$ is any non-zero codeword in the 1-dimensional code which is the dual of $GRS_{n,n-1}(\bm\alpha,\bm\alpha^k*\bm v)$ when all coefficients of $\bm\alpha$ are non-zero, $\bm w$ is from inserting a $0$ in the $i$th coordinate of ${\bm w^{'}}$ where ${\bm w^{'}}$ is any non-zero codeword in the 1-dimensional code which is the dual of $GRS_{n-1,n-2}(\bm\alpha^{'},(\bm\alpha^{'})^k*\bm v^{'})$ for  ${\bm \alpha}'=(\alpha_1,\ldots,\widehat{\alpha_{i}},\ldots,\alpha_{n})$ and ${\bm v}'=(v_1,\ldots,\widehat{v_{i}},\ldots,v_{n})$ if $\alpha_{i}=0$.
\end{Proposition}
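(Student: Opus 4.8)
The plan is to deduce this Proposition from Lemma~\ref{lem:grs}, turning the membership condition it provides into a single matrix identity. Write $W$ for the $k\times n$ matrix whose $i$th row is $\bm\alpha^{i-1}$, so that the left factor of the displayed equation is $WD_{\bm w}$ and the matrix subtracted in Lemma~\ref{lem:grs} is $WD_{\bm v}$; put $\bm u=\bm\alpha^k*\bm v$ and $H=G-WD_{\bm v}$. By Lemma~\ref{lem:grs}, for these $\bm\alpha,\bm v$ the code $\mathcal{C}$ is GTRS exactly when every row of $H$ is a codeword of $GRS_{n,n-k}(\bm\alpha,\bm u)$. The entire task is therefore to recast this row-membership as the equation $WD_{\bm w}G^{T}=D'_{\bm u}$.

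The key tool will be the uniform duality of GRS codes with a common evaluation vector: for distinct $\bm\alpha$ there is a single multiplier $\bm w$, namely $w_i=\big(u_i\prod_{j\ne i}(\alpha_i-\alpha_j)\big)^{-1}$, for which $GRS_{n,m}(\bm\alpha,\bm u)^{\perp}=GRS_{n,n-m}(\bm\alpha,\bm w)$ holds for every $m$. Taking $m=n-1$ identifies $\bm w$ as a generator of the $1$-dimensional dual $GRS_{n,n-1}(\bm\alpha,\bm u)^{\perp}$, i.e.\ the vector named in the statement; taking $m=n-k$ gives $GRS_{n,n-k}(\bm\alpha,\bm u)^{\perp}=GRS_{n,k}(\bm\alpha,\bm w)$, which is precisely the rowspace of $WD_{\bm w}$ (its rows being $\bm\alpha^{i-1}*\bm w$). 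Hence the rows of $H$ lie in $GRS_{n,n-k}(\bm\alpha,\bm u)$ if and only if they are orthogonal to that rowspace, i.e.\ if and only if $WD_{\bm w}H^{T}=0$. Substituting $H=G-WD_{\bm v}$ rewrites this as $WD_{\bm w}G^{T}=WD_{\bm w}D_{\bm v}W^{T}$, a right-hand side that no longer involves $G$.

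It then remains to compute $WD_{\bm w}D_{\bm v}W^{T}$ and recognise its shape. Its $(i,j)$ entry is the weighted power sum $\sum_{\ell}w_\ell v_\ell\alpha_\ell^{\,i+j-2}$, so the matrix is Hankel. Writing $v_\ell=\alpha_\ell^{-k}u_\ell$ and invoking the classical identity $\sum_{\ell}\frac{\alpha_\ell^{m}}{\prod_{j\ne \ell}(\alpha_\ell-\alpha_j)}=h_{m-n+1}(\bm\alpha)$, which vanishes for $0\le m\le n-2$, the entries with $i+j-2\ge k$ all vanish, so the matrix is supported on and above the anti-diagonal. On the anti-diagonal ($i+j-2=k-1$) the entry becomes $\sum_\ell \frac{1}{\alpha_\ell\prod_{j\ne\ell}(\alpha_\ell-\alpha_j)}$, a nonzero multiple of $\prod_\ell\alpha_\ell^{-1}$; this is exactly where the hypothesis $\alpha_\ell\ne0$ enters, and it guarantees that the anti-diagonal coefficients $\bm u$ are nonzero, producing $D'_{\bm u}$.

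The step I expect to be the main obstacle is controlling the entries strictly above the anti-diagonal. For $i+j-2\le k-2$ the same weighted sums have negative effective exponent $m=i+j-2-k<0$, where the complete-homogeneous identity no longer forces vanishing; confirming the stated shape therefore requires either an extra argument that these entries vanish, or reading $D'_{\bm u}$ as anti-triangular with nonzero anti-diagonal. The converse direction faces the dual difficulty: from the matrix identity one must recover $WD_{\bm w}H^{T}=0$, hence match the full Hankel matrix and not merely an anti-diagonal shape. Finally, the case $\alpha_i=0$ is handled exactly as in Lemma~\ref{lem:grs}: puncture the $i$th coordinate, run the argument over $GRS_{n-1,\cdot}(\bm\alpha',(\bm\alpha')^{k}*\bm v')$, and re-insert a $0$ in the $i$th coordinate of $\bm w$, which leaves the displayed identity intact.
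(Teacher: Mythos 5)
Your proposal is, up to the point where you stop, exactly the paper's own proof carried out more carefully. The paper proceeds as you do: it invokes Lemma~\ref{lem:grs}, takes as parity-check matrix of $GRS_{n,n-k}(\bm\alpha,\bm\alpha^{k}*\bm v)$ the matrix $WD_{\bm w}$, where $W$ is the $k\times n$ matrix with rows $\bm 1,\bm\alpha,\dots,\bm\alpha^{k-1}$ and $\bm w$ spans the one-dimensional dual of $GRS_{n,n-1}(\bm\alpha,\bm\alpha^{k}*\bm v)$ (this is your uniform-duality observation), and reduces the GTRS property to the identity $WD_{\bm w}G^{T}=WD_{\bm w}D_{\bm v}W^{T}$. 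Your evaluation of the right-hand side is also correct: since $w_{\ell}v_{\ell}=c\,\alpha_{\ell}^{-k}\prod_{j\neq\ell}(\alpha_{\ell}-\alpha_{j})^{-1}$, its $(a,b)$ entry is $c\sum_{\ell}\alpha_{\ell}^{a+b-2-k}\big/\prod_{j\neq\ell}(\alpha_{\ell}-\alpha_{j})$, which vanishes whenever the exponent lies in $\{0,\dots,n-2\}$, i.e.\ everywhere strictly below the anti-diagonal, and equals $c(-1)^{n-1}\prod_{\ell}\alpha_{\ell}^{-1}\neq 0$ on the anti-diagonal.

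The obstacle you flag is a genuine gap, but it is the paper's gap rather than yours: it is precisely the step at which the paper's proof goes wrong. The paper asserts that in the column indexed by $i$ ``all rows are zero but for row $k-i$'', yet the defining property of $\bm w$ only gives $\sum_{\ell}w_{\ell}v_{\ell}\alpha_{\ell}^{s}=0$ for $k\le s\le k+n-2$; it forces nothing for $s\le k-2$, and those entries (strictly above the anti-diagonal) are generically nonzero. A concrete counterexample: $q=7$, $n=4$, $k=2$, $\bm\alpha=(1,2,3,4)$, $\bm v=\bm 1$, so $\bm\alpha^{2}*\bm v=(1,4,2,2)$ and one may take $\bm w=(1,1,5,3)$ (it is orthogonal to $\bm\alpha^{i}*(1,4,2,2)$ for $i=0,1,2$); then $\sum_{\ell}w_{\ell}=3$, $\sum_{\ell}w_{\ell}\alpha_{\ell}=2$, $\sum_{\ell}w_{\ell}\alpha_{\ell}^{2}=0$ in $\FF_{7}$, so
\[
WD_{\bm w}D_{\bm v}W^{T}=\begin{pmatrix}3&2\\2&0\end{pmatrix},
\]
which is not anti-diagonal, even though every generator matrix $G$ of the form (\ref{eq:gtrs}) built on this $\bm\alpha,\bm v$ is a GTRS generator matrix and satisfies $WD_{\bm w}G^{T}$ equal to this same matrix (the rows contributed by $M$ are annihilated by $WD_{\bm w}$). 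So what this method actually proves is the identity with $D'_{\bm u}$ replaced by the Hankel matrix $WD_{\bm w}D_{\bm v}W^{T}$, supported on and above the anti-diagonal with nonzero anti-diagonal entries --- your ``anti-triangular'' reading --- and the proposition as stated is established in neither direction (for the converse one must match this full Hankel matrix, not merely an anti-diagonal shape, as you also note). Your refusal to claim the anti-diagonal form is the mathematically correct position; completing a proof would require either amending the statement accordingly or finding a genuinely different argument.
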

\begin{proof}
We first suppose all coefficients of $\bm\alpha$ are nonzero, and it is same for other case.
By Lemma \ref{lem:grs}, the rows of \[
G-
\begin{pmatrix}
\bm 1\\
\bm\alpha\\
\vdots \\
\bm \alpha^{k-1}
\end{pmatrix}D_{\bm v}
\] must be codewords of $GRS_{n,n-k}({\bm \alpha},{\bm \alpha}^k*v)$, which means they must live in the kernel of the corresponding parity check matrix.

A parity check matrix $H$ for this code is a generator matrix of its dual, which is known to be $GRS_{n,k}({\bm \alpha, {\bm w}})$ for ${\bm w}$  any non-zero codeword in the
$1$-dimensional code which is the dual of the code $GRS_{n,n-1} ({\bm \alpha}, {\bm \alpha}^k*{\bm v})$; in other words, it satisfies
$$
\sum_{l=0}^{n-1}w_l \alpha_l^kv_l h(\alpha_l) = 0
$$
for any polynomial $h$ of degree $n-1$.
Explicitly, we have
\[
H=
\begin{pmatrix}
\bm 1\\
\bm\alpha\\
\vdots \\
\bm \alpha^{k-1}
\end{pmatrix}D_{\bm w}.
\]

A general row from the left hand-side is
\[
{\bm g}-{\bm \alpha}^iD_{\bm v},~0\leq i \leq k-1
\]
so for such a row to be a codeword, it is enough to apply $H$ and make sure the result is zero:
\begin{eqnarray*}
H({\bm g}^T-D_{\bm v}^T ({\bm \alpha}^i)^T)
& = &
\begin{pmatrix}
\bm 1\\
\bm\alpha\\
\vdots \\
\bm \alpha^{k-1}
\end{pmatrix}D_{\bm w}{\bm g}^T -
\begin{pmatrix}
\bm 1\\
\bm\alpha\\
\vdots \\
\bm \alpha^{k-1}
\end{pmatrix}D_{\bm w * \bm v}({\bm \alpha}^i)^T.
\end{eqnarray*}
Next ${\bm \alpha}^j({\bm \alpha^i})^T= \sum_l \alpha_l^{j+i}$ for $0\leq i,j \leq k-1$ and $h_j(\alpha_l)=\alpha_l^j$,  for $h_j(x)=x^j$, $0\leq j \leq n-1$, so
\begin{eqnarray*}
\begin{pmatrix}
\bm 1\\
\bm\alpha\\
\vdots \\
\bm \alpha^{k-1}
\end{pmatrix}D_{\bm w * \bm v}({\bm \alpha}^i)^T
&=&
\begin{pmatrix}
\sum_l w_lv_l \alpha_l^{i} \\
\sum_l w_lv_l\alpha_l^{1+i} \\
\vdots \\
\sum_l w_lv_l\alpha_l^{k-1+i}
\end{pmatrix} \\
&=&
\begin{pmatrix}
\sum_l w_lv_l \alpha_l^kh_{i-k}(\alpha_l) \\
\sum_l w_lv_l\alpha_l^kh_{1+i-k}(\alpha_l) \\
\vdots \\
\sum_l w_lv_l\alpha_l^kh_{-1+i}(\alpha_l)
\end{pmatrix}
\end{eqnarray*}
where all rows such that $h_j(x)$ has an index $0\leq j\leq n-1$  are zero  by definition of ${\bm w}$, namely
$
\sum_{l=0}^{n-1}w_l \alpha_l^kv_l h(\alpha_l) = 0.
$
When $i=0$, then all rows are zero but for the last one. In general, all rows are zero but for row $k-i$. This results in an anti-diagonal matrix $D'_{\bm u}$.
\end{proof}

\section{Conclusion}
Motivated by the classification of MDS codes,
we proposed a generic construction of MDS codes, out of which we exhibited families of non-GRS MDS codes. We discussed connections with GTRS codes, but we were not yet able to tell whether some of the proposed families might contain non-GTRS codes, since the characterization of (non-)GTRS codes remains an open but seemingly challenging question.

\vskip 4mm

\noindent {\bf Acknowledgement.} This work was supported by NSFC (Grant Nos. 11871025, 12271199) and China Scholarship Council (No. 202306770055). The hospitality of the division of mathematical sciences at Nanyang Technolgical University during the first author's visit is gratefully acknowledged.


\end{document}